\newtheorem{proposition}{Proposition}
\newcommand{\Exp}[1]{{\rm{E}}[ #1 ]}
\newcommand{\Var}[1]{{\rm{Var}}[ #1 ]}
\newcommand{\tr}{\text{\rm trace}}
\newcommand{\bl}[1]{{\mathbf #1}}
\newcommand{\bs}[1]{\boldsymbol #1}
\newcommand{\blind}{0}
\begin{document}

\def\spacingset#1{\renewcommand{\baselinestretch}%
{#1}\small\normalsize} \spacingset{1}


\if0\blind
{
  \title{\bf Linear Source Apportionment using Generalized Least Squares}
  \author{Jordan Bryan\thanks{
    The authors gratefully acknowledge \textit{Chris Osburn for providing the Neuse River Dataset}}\hspace{.2cm}\\
    and \\
    Peter Hoff \\
    Department of Statistical Science, Duke University}
  \maketitle
} \fi

\if1\blind
{
  \bigskip
  \bigskip
  \bigskip
  \begin{center}
    {\LARGE\bf Linear Source Apportionment using Generalized Least Squares}
\end{center}
  \medskip
} \fi

\bigskip
\begin{abstract}
Motivated by applications to water quality monitoring using fluorescence spectroscopy, we develop the source apportionment model for high dimensional profiles of dissolved organic matter (DOM). We describe simple methods to estimate the parameters of a linear source apportionment model, and show how the estimates are related to those of ordinary and generalized least squares. Using this least squares framework, we analyze the variability of the estimates, and we propose predictors for missing elements of a DOM profile. We demonstrate the practical utility of our results on fluorescence spectroscopy data collected from the Neuse River in North Carolina.
\end{abstract}

\noindent%
{\it Keywords:}  dependent data, latent variable model, linear model,
source separation. 
\vfill

\newpage
\spacingset{2} 
\section{Introduction}\label{sec:intro}

Increasing land development 
and the growth of large-scale agricultural operations
have led to concerns about water pollution 
and a need for 
quantitative methods for water quality monitoring. 
The water quality of a river basin is affected by the water quality 
of the streams that feed into it, which in turn are affected by 
the land-use features of their local watersheds. 
As a result, the water at a particular point of a river 
will contain a mixture of dissolved organic matter (DOM) 
whose sources are determined by the upstream land use. 
For example, the DOM profile of the water at a point 
downstream from both a poultry 
farm and a community septic system will resemble a mixture 
of the DOM profiles of water near the farm and of water near the septic system. 

In order to monitor pollution and the sources of DOM in the 
Neuse River basin in Eastern North Carolina, 
researchers at North Carolina State University obtained 
202 water samples, each one being representative of one of nine different categories of land use.
Fluorescence spectroscopy was used to obtain a multivariate DOM 
profile for each water sample. 
Taken together, these 202 profiles 
make up a ``dictionary'' 
to which the DOM profile of a water sample obtained downstream can be compared
\citep{osburn_predicting_2016}. 
In particular, it is of interest to estimate in what proportions 
each of the nine source categories contribute to the DOM profile 
of a downstream water sample. Such estimates can identify water quality issues
and provide information about land use and drainage patterns in the 
river basin. 

A DOM profile is often represented as a matrix, having elements 
that record the fluorescence intensity spectra emitted by a water sample when it is excited with light of a range of frequencies (see Figure \ref{fig:ex_eems}). 
In the water chemistry literature it is common practice to 
stack these ``excitation-emission matrices'' (EEMs) 
to form a three-way array, and then to analyze the data array using multiway statistical methods, in 
particular, 
the PARAFAC model. 
\citet{osburn_predicting_2016} developed a PARAFAC-based method 
called FluorMod to
estimate the 
source proportions of a downstream water sample from its DOM profile
and the dictionary of profiles from the nine source categories. 
While providing promising results, FluorMod is somewhat numerically 
complicated, involving both simulation and iterative estimation 
of the non-linear PARAFAC model. These complexities are a barrier 
to adoption of the method by potential users, such 
as managers of drinking water and wastewater treatment facilities, 
who may only be familiar with or have the software to implement 
simple linear models. 

\begin{figure}
    \centering
    \includegraphics[scale=0.8]{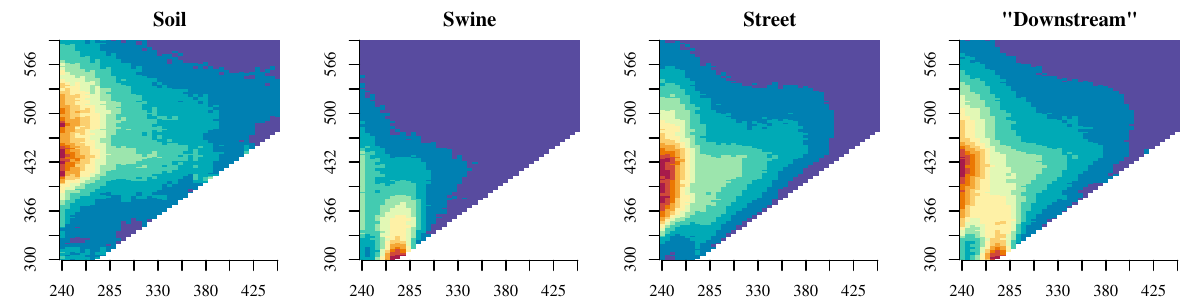}
    \caption{From left to right, three EEMs from the Neuse River dictionary and one hypothetical downstream EEM. As is typical, the lower-right region of each EEM is excluded due to Rayleigh scatter \citep{andersen_practical_2003}. Each of the three left EEMs represents a DOM profile from a particular land use source. The right-most EEM, which is meant to represent an EEM from a downstream water sample, is computed by taking the average of the fluorescence intensities of the left three EEMs.}
    \label{fig:ex_eems}
\end{figure}

As alternatives, \citet{bryan_routine_2023} considered 
two simple methods of source estimation that can be implemented 
using only the tools of 
simple linear regression and vector summation. 
The first method, which we refer to as 
``average-then-regress'' (ATR),  proceeds by averaging the dictionary 
DOM profiles by source category, and then 
regressing the downstream profile on these average profiles. 
The second method, called ``regress-then-sum'' (RTS), 
instead regresses the downstream profile 
on all of the dictionary profiles, then sums the coefficients by 
category. In multiple simulation studies, 
it was observed 
that the RTS method provided notably superior estimates than 
either the FluorMod method or the ATR method. Some heuristic explanation of 
this phenomenon was 
given, but no theory was provided. 

In this article  we formalize the ATR and RTS methods in the 
context of a latent variable model for the downstream DOM profile, which we refer to as the 
\emph{source apportionment model}. Marginalizing over the 
latent variables, this model can be expressed as a linear regression 
model where both the mean and covariance of the downstream DOM profile are affected by the proportion of DOM arising from each source. We show that the ATR estimate corresponds to a feasible 
ordinary least-squares (OLS) estimate, whereas the RTS estimate 
corresponds to a type of feasible generalized least squares (GLS) estimate. 
This result explains the observed superior performance of the 
RTS estimate, as 
GLS estimates have lower mean squared error than OLS 
estimates in general. Additionally, we show how this GLS framework may be used to obtain standard errors for the coefficient estimates, as well as feasible predictors of missing data in the downstream DOM profile. 

While our discussion focuses on fluorescence spectroscopy data, we note that the source apportionment model may be applied to other kinds of data with similar structure, such as hyperspectral images, audio spectrograms, and power meter readings collected over time. For each of these, the estimated coefficients may represent, respectively, proportions of land cover (tree, water, street) in a given pixel \citep{bioucas-dias_hyperspectral_2012}, proportions of note amplitudes sounding at a given time \citep{benetos_automatic_2019}, and proportions of several appliances used over the period of a given month \citep{wytock_contextually_2013}. However, the source apportionment model is distinct from other models that are commonly applied to these data for the purpose of \textit{source separation}. In the task of source separation, the estimands of interest are the unobserved source signals themselves, not the coefficients representing the contributions of these signals to the total.

The remainder of this article is as follows: 
In the next section, we formulate the source apportionment model and 
inference problem, and describe the ATR and RTS estimates of the 
source proportions. 
In Section 3 we show how the ATR and RTS estimates can be interpreted as 
OLS and GLS estimates, respectively, in a linear regression model. We then extend this analogy to propose ATR and RTS predictors for missing data. Section 4 discusses the relative variability of the ATR and RTS estimates and also develops a method to obtain standard errors for the RTS estimates. Finally, Section 5 illustrates the results in a numerical study using the dictionary of 202 DOM profiles originally described in \citet{osburn_predicting_2016}. Directions for further research are discussed in Section 6. 

\section{The source apportionment model} 
Let $\bl y$ be a $p$-dimensional vector representing the DOM profile of a downstream water sample of unknown composition. For such a profile obtained using fluorescence spectroscopy, it is reasonable to assume that 
$\bl y$ is a weighted sum of 
$K$ \emph{latent source profiles}:
$\bl x_1^*,\ldots, \bl x_K^*$,
\begin{align}
\bl y & = \theta_1 \bl x_1^* + \cdots + \theta_K \bl x_K^*.
\label{eqn:llmod} 
\end{align} 
The latent source profiles represent the DOM profiles of the component water samples from each of the $K$ source categories, which contribute to the combined water sample with profile $\bl y$. The vector $\bs\theta$ represents the proportions of each of the $K$ sources that contribute to $\bl y$. If $\bl x_1^*, \dots, \bl x_K^*$ were known, $\bs \theta$ could be determined exactly as the solution to a least squares regression. However, the latent source profiles cannot be observed directly, as only the total downstream DOM profile $\bl y$ can be measured by the spectrometer.

As a substitute for direct observation, we assume each latent source profile is a random vector arising from a source-specific distribution, so that $\bl x_1^* \sim P_1 ,\ldots, \bl x_K^* \sim P_K$ with $\bl x_1^*,\ldots, \bl x_K^*$ being jointly independent. We further assume that data information about $P_1,\ldots, P_K$ is available in the form of a dictionary of $n$ DOM profiles $\bl X \in \mathbb{R}^{p \times n}$. The dictionary profiles may be mixtures of known source proportions in general (see comment at the conclusion of Section \ref{sec:var}), but for now, we assume each dictionary DOM profile is representative of exactly one source category, so that $n = \sum_{k=1}^K n_k$. Letting $\bl x_{i,k}$ be the DOM profile of the $i$th dictionary water sample from source category $k$, the 
$\bl x_{i,k}$'s along with the latent $\bl x_{k}^*$'s are modeled as random samples from the source-specific distributions
\begin{align}
\bl x_{1,k} ,\ldots, \bl x_{n_k,k},\bl x_k^* \sim 
 \text{i.i.d.} \  P_k, \ k=1,\ldots,K,  
\label{eqn:smod} 
\end{align} 
with these profiles additionally being independent across source categories. We refer to the linear model (\ref{eqn:llmod}) together 
with the sampling model (\ref{eqn:smod}) as the 
\emph{source apportionment model}, and refer to the 
task of estimating $\bs\theta$ from $\bl y$ and the dictionary profiles 
as the \emph{source apportionment problem}. In what follows, we consider the source apportionment problem in source apportionment models with $n < p$ and $P_1,\ldots, P_K$ non-degenerate, so that $\bl X$ is full-rank with probability 1.

The source apportionment model bears some resemblance to a latent 
factor model. It may also be viewed a linear regression model with correlated 
errors. 
To make these connections, we first 
write the model in matrix 
form: Let $\bl X^* \in \mathbb R^{p\times K}$ be the matrix 
formed by column-binding $\bl x_1^*,\ldots, \bl x_K^*$, and let 
$\bs\theta = (\theta_1,\ldots, \theta_K)^\top$. Then the sampling model 
(\ref{eqn:llmod}) is $\bl y = \bl X^* \bs\theta$. This looks similar to a linear latent factor model, which expresses 
a data vector as a factor loading matrix multiplied by a latent 
factor vector, both of which are unobserved. However, the source 
apportionment model differs from the latent factor model in terms of 
both the target of inference, and the data available for estimation. 
In particular, in the source apportionment problem there is only one outcome vector $\bl y$, the matrix $\bl X^*$ is viewed as random, and the target of 
inference is $\bs\theta$. In contrast, in factor analysis
we would have multiple $\bl y$-vectors observed, $\bs\theta$ would be viewed as random (typically $\bs\theta \sim N_K(\bl 0, \bl I_K)$), and the target of inference would be $\bl X^*$.  

Now let $\bs \mu_k = \Exp{\bl x_k^*}, \Sigma_k = \Var{\bl x_k^*}$, 
$k=1,\ldots, K$, be the mean vectors and covariance matrices of the 
distributions $P_1,\ldots, P_K$, and let $\bl M\in \mathbb{R}^{p\times K}$ 
be the matrix obtained by column-binding $\bs \mu_1,\ldots, \bs\mu_K$. 
Marginalizing over the latent profile vectors $\bl X^*$, we have 
\begin{align*} 
\Exp{\bl y} & = 
   \Exp{ \bl X^* \bs \theta }  = \bl M \bs\theta \\ 
\Var{\bl y} & =  \Var{\theta_1 \bl x_1^* + \cdots +  \theta_K \bl x_K^* } \\ 
   & =  \theta_1^2 \Sigma_1 + \cdots +  \theta_K^2   \Sigma_K \equiv
  \Sigma_\theta. 
\end{align*}
If $\bl M$ were known, then the above two equations specify a linear 
regression model for $\bl y$, in which case the 
OLS estimate of $\bs\theta$ would be 
$(\bl M^\top \bl M)^{-1} \bl M^\top \bl y$, and the GLS estimate 
would be 
$(\bl M^\top\Sigma_\theta^{-1} \bl M)^{-1} \bl M^\top\Sigma_\theta^{-1} 
\bl y$. Of course, the latter can only be computed if additionally
$\bs\theta$ were known, which if it were, would make estimation 
unnecessary. If instead
\begin{align}\label{eq:equal_covs}
\Sigma_1 = \cdots = \Sigma_K := \Sigma,
\end{align}
then $\Var{\bl y} \propto \Sigma$, and the GLS estimate can be computed without knowledge of $\bs\theta$ because it is invariant to re-scaling of the error covariance matrix. Note that here and in what follows $\bl A \propto \bl B$ means $\bl A = c \bl B$ for some constant $c$. In the next section, we use assumption \eqref{eq:equal_covs} to develop the ATR and RTS estimates and discuss their respective connections to OLS and GLS estimates.

\section{Linear estimators of source proportions}\label{sec:linear}

The authors in \cite{bryan_routine_2023} proposed two estimates, the ATR and RTS estimates, as solutions to the source apportionment problem. Both estimates can be motivated by the idea that elements of the DOM profile dictionary $\bl X \in \mathbb{R}^{p \times n}$ or functions thereof may serve as surrogates for the latent profiles $\bl X^*$. Let $\bl A$ be the $n \times K$ matrix with entries
\begin{equation}\label{eq:design}
    A_{i k} = \left\{\begin{array}{cl}
         1 & ~~\text{if DOM profile $i$ is from source category $k$} \\
         0 & ~~\text{otherwise.} 
    \end{array}\right. 
\end{equation}
Then the ATR and RTS estimates may be written as
\begin{align*}
    \hat{\bs \theta}_{\mathrm{ATR}} &= \bl A^\top \bl A(\bl A^\top \bl X^\top \bl X \bl A)^{-1} \bl A^\top \bl X^\top \bl y, \\
    \hat{\bs \theta}_{\mathrm{RTS}} &= \bl A^\top (\bl X^\top \bl X)^{-1} \bl X^\top \bl y .
\end{align*}
The ATR estimate is obtained by regressing $\bl y$ on the matrix of DOM profile averages from each source category, which can be written as $\bl X \bl A (\bl A^\top \bl A)^{-1}$. The RTS estimate is obtained by first regressing $\bl y$ on the matrix containing all dictionary profiles and then summing the resulting coefficients by source category. Intuition suggests that the ATR estimate should perform well when each of the latent DOM profiles resembles the average dictionary profile from the corresponding source category. The RTS estimate, on the other hand, may perform well even if each latent profile only resembles one of the dictionary profiles from its source category.

In the context of the source apportionment model, the ATR and RTS estimates can be more formally understood as OLS and GLS estimates of $\bs \theta$, where $\bl X$ has been used to obtain feasible substitutes for the unknowns $\bl M$ and $\Sigma$. According to the sampling model \eqref{eqn:smod}, we have that $\Exp{\bl X} = \bl M \bl A^\top$. Furthermore, by independence of the DOM profiles within and across source categories, along with assumption \eqref{eq:equal_covs}, we have $\Var{\bl X} \propto \bl I_n \otimes \Sigma$, where $\otimes$ denotes the Kronecker product. Let $\hat{\bl M} = \bl X \bl A (\bl A^\top \bl A)^{-1}$, so that $\hat{\bl M}$ is the OLS estimate of $\bl M$ based on $\bl X$. Next, let
\begin{align*}
\bl S = (\bl X - \hat{\bl M} \bl A^\top) (\bl X - \hat{\bl M} \bl A^\top)^\top = \bl X (\bl I_n - \bl P_{\bl A}) \bl X^\top,
\end{align*}
where $\bl P_{\bl A} = \bl A (\bl A^\top \bl A)^{-1} \bl A^\top$, so that $\bl S$ is the $p \times p$ residual sum of squares matrix from the OLS fit. Finally, define a mean-zero ``residual" matrix $\bl E = \bl X \bl N$, where $\bl N \in \mathbb{R}^{n \times (n - K)}$ is an orthonormal basis for the null space of $\bl A^\top$. Then $\bl N^\top \bl N = \bl I_{n-K}$, $\bl N \bl N^\top = \bl I_n - \bl P_{\bl A}$, $\bl S = \bl E \bl E^\top$, and
\begin{align*}
\Exp{\hat{\bl M}} = \Exp{\bl X} \bl A (\bl A^\top \bl A)^{-1} = \bl M \\
\Exp{\bl S} = \Exp{\bl E \bl E^\top} = \Sigma \times (n - K) .
\end{align*}
Hence, a reasonable, feasible OLS estimate of $\bs \theta$ is $(\hat{\bl M}^\top \hat{\bl M})^{-1} \hat{\bl M}^\top \bl y$, which is precisely the ATR estimator, as the columns of $\hat{\bl M}$ are the average DOM profiles from each source category. Since $n <p$, $\bl S$ will be singular, so we consider estimating $\Sigma$ as
\begin{align*}
    \hat{\Sigma}_\gamma \propto \bl S + \gamma \bl I_p 
\end{align*}
for some regularization parameter $\gamma \geq 0$. This leads to the feasible GLS estimate
\begin{align}\label{eq:fgls}
    \hat{\bs \theta}_{\gamma} = (\hat{\bl M}^\top \hat{\Sigma}_\gamma^{-1} \hat{\bl M})^{-1} \hat{\bl M}^\top \hat{\Sigma}_\gamma^{-1} \bl y .
\end{align}

To see the relationship between this estimate and the RTS estimate, consider that the RTS estimate can be written as $\hat{\bs \theta}_{\mathrm{RTS}} = \bl A^\top (\bl X^\top \bl X)^{-1} \bl X^\top \bl y$. This is just $\bl A^\top \hat{\bs \beta}$, where $\hat{\bs \beta}$ is the OLS estimate for $\bs \beta$ in the linear model $\Exp{\bl y} = \bl X \bs \beta$. We can reparameterize this linear model as
\begin{align*}\label{eq:reparam}
    \bl X \bs \beta &= \bl X \bl P_{\bl A} \bs \beta + \bl X (\bl I_n - \bl P_{\bl A}) \bs \beta \\
    &= \hat{\bl M} \bs \theta + \bl E \bs \eta \\
    &\equiv \bl Z \bs \psi,
\end{align*}
where $\bs \theta = \bl A^\top \bs \beta$, $\bs \eta = \bl N^\top \bs \beta$, $\bl Z = [\hat{\bl M} ~ \bl E]$, and $\bs \psi = (\bs \theta^\top \bs \eta^\top)^\top$. This reparameterization relates the linear model with all DOM profiles as regressors to the linear model with two sets of regressors: the average profiles $\hat{\bl M}$ and the ``residual" profiles $\bl E$. The following proposition then relates the coefficient estimates from these two models.
\begin{proposition}
    \label{prop:1}
    Let $\hat{\bs \beta} = (\bl X^\top \bl X)^{-1} \bl X^\top \bl y$, and let $\hat{\bs \psi} = (\bl Z^\top \bl Z)^{-1} \bl Z^\top \bl y$, where $\bl Z$ is defined as above. Then
    \begin{equation*}
        \bl A^\top \hat{\bs \beta} = \hat{\bs \psi}[1:K]
    \end{equation*}
\end{proposition}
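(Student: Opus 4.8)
The plan is to recognize the reparameterization $\bl X \bs\beta = \bl Z \bs\psi$ as an \emph{invertible linear change of the regressor matrix}, and then invoke the standard fact that ordinary least squares is equivariant under such a change. First I would write $\bl Z$ explicitly in terms of $\bl X$. Since $\hat{\bl M} = \bl X \bl A (\bl A^\top \bl A)^{-1}$ and $\bl E = \bl X \bl N$, we have $\bl Z = [\hat{\bl M}~\bl E] = \bl X \bl T$, where $\bl T = [\bl A (\bl A^\top \bl A)^{-1}~\bl N] \in \mathbb{R}^{n\times n}$. Because $\bl X$ is full-rank (given $n<p$ and the $P_k$ non-degenerate), establishing that $\bl T$ is invertible simultaneously guarantees that $\bl Z = \bl X \bl T$ has full column rank, so that $\hat{\bs\psi} = (\bl Z^\top \bl Z)^{-1}\bl Z^\top \bl y$ is well defined.

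The crux of the argument is to verify that $\bl T$ is invertible and to identify its inverse as the block-stacked matrix $\bl T^{-1} = \left[\begin{smallmatrix}\bl A^\top \\ \bl N^\top\end{smallmatrix}\right]$. I would check this directly by forming the product $\left[\begin{smallmatrix}\bl A^\top \\ \bl N^\top\end{smallmatrix}\right]\,[\bl A (\bl A^\top \bl A)^{-1}~\bl N]$ and reading off its four blocks. The diagonal blocks give $\bl A^\top \bl A (\bl A^\top \bl A)^{-1} = \bl I_K$ and $\bl N^\top \bl N = \bl I_{n-K}$, the latter being one of the orthogonality relations already recorded in the excerpt. The off-diagonal blocks both vanish because the columns of $\bl N$ form a basis for the null space of $\bl A^\top$, so $\bl A^\top \bl N = \bl 0$ and hence $\bl N^\top \bl A = \bl 0$ as well. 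Thus the product equals $\bl I_n$, confirming both invertibility and the stated form of $\bl T^{-1}$.

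With $\bl T$ invertible, OLS equivariance follows by a short substitution: $\hat{\bs\psi} = (\bl T^\top \bl X^\top \bl X \bl T)^{-1}\bl T^\top \bl X^\top \bl y = \bl T^{-1}(\bl X^\top \bl X)^{-1}(\bl T^\top)^{-1}\bl T^\top \bl X^\top \bl y = \bl T^{-1}\hat{\bs\beta}$. Reading off the first $K$ rows of $\bl T^{-1}$, which are exactly $\bl A^\top$, yields $\hat{\bs\psi}[1:K] = \bl A^\top \hat{\bs\beta}$, which is the claim. I do not expect a genuine obstacle here; the only step requiring care is the block computation of $\bl T^{-1}$, and that reduces entirely to the orthogonality relations $\bl A^\top \bl N = \bl 0$ and $\bl N^\top \bl N = \bl I_{n-K}$ already available from the construction of $\bl N$.
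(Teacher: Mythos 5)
Your proof is correct and takes essentially the same route as the paper's: both arguments recognize $\bl Z = \bl X \bl T$ with $\bl T = [\bl A (\bl A^\top \bl A)^{-1} \;\; \bl N]$ invertible with inverse the stacked matrix $[\bl A \;\; \bl N]^\top$, and then exploit OLS equivariance under this reparameterization to get $\hat{\bs \psi} = \bl T^{-1} \hat{\bs \beta}$, whose first $K$ rows are exactly $\bl A^\top \hat{\bs \beta}$. The only cosmetic difference is that you verify $\bl T^{-1} \bl T = \bl I_n$ blockwise from $\bl A^\top \bl N = \bl 0$ and $\bl N^\top \bl N = \bl I_{n-K}$, whereas the paper checks the product in the opposite order using $\bl P_{\bl A} + \bl N \bl N^\top = \bl I_n$; for square matrices these are equivalent.
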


Hence, $\hat{\bs \theta}_{\mathrm{RTS}} = \hat{\bs \psi}[1:K]$. Together with previous results, Proposition \ref{prop:1} implies that while the ATR estimate is equivalent to the OLS estimate of the $\hat{\bl M}$ coefficients in a regression of $\bl y$ on $\hat{\bl M}$, the RTS estimate is equivalent to the OLS estimate of the $\hat{\bl M}$ coefficients in an expanded linear model, one which controls for the variation described by $\bl E$. On the other hand, applying \cite{seber_linear_2003} Theorem 3.6(i) to $\hat{\bs \psi}[1:K]$, we find another equivalence 
\begin{align*}
    \hat{\bs \theta}_{\mathrm{RTS}} = (\hat{\bl M}^\top (\bl I_p - \bl P_{\bl E}) \hat{\bl M})^{-1} \hat{\bl M}^\top (\bl I_p - \bl P_{\bl E}) \bl y,
\end{align*}
where $\bl P_{\bl E} = \bl E (\bl E^\top \bl E)^{-1} \bl E^\top$. This shows that the RTS estimate can also be expressed as a type of feasible GLS estimate, in which $\bl I_p - \bl P_{\bl E}$ plays the role of an inverse covariance matrix. The next proposition makes this notion precise, showing that the RTS estimate is a limiting case of the feasible GLS estimate in \eqref{eq:fgls}.
\begin{proposition}\label{prop:2}
Let $\hat{\bs \theta}_{\gamma} = (\hat{\bl M}^\top \hat{\Sigma}_\gamma^{-1} \hat{\bl M})^{-1} \hat{\bl M}^\top \hat{\Sigma}_\gamma^{-1} \bl y$. Then
    \begin{equation*}
        \lim_{\gamma \rightarrow 0}~ \hat{\bs \theta}_{\gamma} = \hat{\bs \theta}_{\mathrm{RTS}}.
    \end{equation*}
    Also
    \begin{equation*}
        \lim_{\gamma \rightarrow \infty}~ \hat{\bs \theta}_{\gamma} = \hat{\bs \theta}_{\mathrm{ATR}}.
    \end{equation*}
\end{proposition}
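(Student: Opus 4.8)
\textbf{Proof plan for Proposition \ref{prop:2}.}

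The plan is to exploit the fact, already noted in the text, that a GLS estimate is invariant to positive rescaling of its weight matrix: for any $c>0$, $(\hat{\bl M}^\top (c\bl W) \hat{\bl M})^{-1}\hat{\bl M}^\top (c\bl W)\bl y = (\hat{\bl M}^\top \bl W\hat{\bl M})^{-1}\hat{\bl M}^\top \bl W\bl y$. Since $\hat\Sigma_\gamma \propto \bl S + \gamma\bl I_p$, I would therefore replace $\hat\Sigma_\gamma^{-1}$ in \eqref{eq:fgls} by the rescaled weight matrix $\bl W_\gamma := \gamma(\bl S + \gamma\bl I_p)^{-1}$, so that $\hat{\bs\theta}_\gamma = (\hat{\bl M}^\top \bl W_\gamma\hat{\bl M})^{-1}\hat{\bl M}^\top \bl W_\gamma\bl y$ for every $\gamma>0$. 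The rescaling by $\gamma$ is the crucial device: although $(\bl S+\gamma\bl I_p)^{-1}$ diverges as $\gamma\to 0$ because $\bl S$ is singular, the rescaled matrix $\bl W_\gamma$ has finite limits at both ends of the range of $\gamma$, and the whole problem reduces to identifying those two limits.

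To find them, I would diagonalize $\bl S = \bl E\bl E^\top = \bl U\,\mathrm{diag}(\lambda_1,\dots,\lambda_r,0,\dots,0)\,\bl U^\top$, where $\bl U$ is orthogonal, $r = n-K$ is the rank of $\bl E$ (full w.p.\ 1), and the $\lambda_i>0$. Then $\bl W_\gamma = \bl U\,\mathrm{diag}\!\big(\gamma/(\lambda_i+\gamma)\big)\,\bl U^\top$, and the behaviour of each diagonal entry is immediate: for a nonzero eigenvalue $\gamma/(\lambda_i+\gamma)\to 0$ as $\gamma\to 0$ and $\to 1$ as $\gamma\to\infty$, while for a zero eigenvalue the entry equals $1$ for all $\gamma$. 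Collecting terms shows $\lim_{\gamma\to 0}\bl W_\gamma$ is the orthogonal projection onto the null space of $\bl S$, which coincides with the projection $\bl I_p - \bl P_{\bl E}$ onto the orthogonal complement of the column space of $\bl E$, while $\lim_{\gamma\to\infty}\bl W_\gamma = \bl I_p$.

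With the two limiting weight matrices in hand, I would pass to the limit in the estimator using continuity of matrix inversion. As $\gamma\to 0$, continuity gives $\hat{\bs\theta}_\gamma \to (\hat{\bl M}^\top(\bl I_p-\bl P_{\bl E})\hat{\bl M})^{-1}\hat{\bl M}^\top(\bl I_p-\bl P_{\bl E})\bl y$, which is exactly the GLS-form expression for $\hat{\bs\theta}_{\mathrm{RTS}}$ derived just before the proposition (via Seber Theorem 3.6(i)); as $\gamma\to\infty$, it gives $(\hat{\bl M}^\top\hat{\bl M})^{-1}\hat{\bl M}^\top\bl y = \hat{\bs\theta}_{\mathrm{ATR}}$. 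The only thing to check for continuity to apply is that the two limiting Gram matrices $\hat{\bl M}^\top(\bl I_p-\bl P_{\bl E})\hat{\bl M}$ and $\hat{\bl M}^\top\hat{\bl M}$ are nonsingular; the latter holds because the average profiles are linearly independent w.p.\ 1, and the former holds because $\bl Z = [\hat{\bl M}~\bl E]$ has full column rank w.p.\ 1 (equivalently, $\bl X$ is full rank), which is precisely the condition that the cross-product of $\hat{\bl M}$ after projecting out $\bl E$ is invertible.

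The main obstacle is the $\gamma\to 0$ limit, where the raw weight matrix $(\bl S+\gamma\bl I_p)^{-1}$ has entries of order $1/\gamma$ and no naive limit exists. The resolution is the rescaling by $\gamma$, justified by scale invariance, which converts the divergence into the clean projection limit $\bl I_p - \bl P_{\bl E}$; once this is established the remainder is a routine continuity argument. The point to state carefully is that the surviving $1/\gamma$ component of $(\bl S+\gamma\bl I_p)^{-1}$ lies exactly in the null-space directions of $\bl S$, so that after rescaling the bounded (pseudoinverse) component is annihilated and only the projection onto $\mathrm{null}(\bl S)$ remains---this is what makes the limiting weight matrix a projection rather than a full-rank matrix, and hence recovers the RTS rather than the ATR estimate at the $\gamma\to 0$ end.
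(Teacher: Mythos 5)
Your proof is correct, but it follows a genuinely different route from the paper's. The paper's proof applies the \cite{seber_linear_2003} identity $\bl I_p - \hat{\bl M}(\hat{\bl M}^\top \hat{\Sigma}_\gamma^{-1}\hat{\bl M})^{-1}\hat{\bl M}^\top\hat{\Sigma}_\gamma^{-1} = \hat{\Sigma}_\gamma \bl U(\bl U^\top \hat{\Sigma}_\gamma \bl U)^{-1}\bl U^\top$ with $\bl U$ a null-space basis for $\hat{\bl M}^\top$, then chooses $\bl U$ concretely from the singular vectors of $\bl X(\bl X^\top\bl X)^{-1}\bl N$ and of $\bl I_p - \bl P_{\bl X}$, computes $(\bl U^\top\hat{\Sigma}_\gamma\bl U)^{-1}$ in closed block-diagonal form, and extracts both limits by explicit algebra, arriving at $\bl A^\top(\bl X^\top\bl X)^{-1}\bl X^\top\bl y$ directly. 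You instead normalize the weight matrix to $\bl W_\gamma = \gamma(\bl S + \gamma\bl I_p)^{-1}$ using the scale invariance of GLS, observe via the spectral decomposition of $\bl S$ that $\bl W_\gamma \rightarrow \bl I_p - \bl P_{\bl E}$ as $\gamma \rightarrow 0$ and $\bl W_\gamma \rightarrow \bl I_p$ as $\gamma \rightarrow \infty$, and conclude by continuity of $\bl W \mapsto (\hat{\bl M}^\top\bl W\hat{\bl M})^{-1}\hat{\bl M}^\top\bl W\bl y$ at the two limit points, whose well-definedness you correctly justify (full column rank of $\bl Z = [\hat{\bl M}~\bl E]$ makes the Schur-complement Gram matrix $\hat{\bl M}^\top(\bl I_p - \bl P_{\bl E})\hat{\bl M}$ invertible). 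Your argument is shorter and conceptually cleaner: it makes transparent that $\hat{\bs\theta}_\gamma$ interpolates between two GLS estimates whose weight matrices are $\bl I_p - \bl P_{\bl E}$ and $\bl I_p$, and it avoids the explicit SVD bookkeeping. The trade-offs are that you must invoke the identity $\hat{\bs\theta}_{\mathrm{RTS}} = (\hat{\bl M}^\top(\bl I_p - \bl P_{\bl E})\hat{\bl M})^{-1}\hat{\bl M}^\top(\bl I_p - \bl P_{\bl E})\bl y$ established in the main text via Proposition \ref{prop:1} and Seber's Theorem 3.6(i) (legitimate and non-circular, since that equivalence is proved independently of Proposition \ref{prop:2}), whereas the paper's proof is self-contained at this point; and the paper's machinery transfers directly to the predictor limits in Proposition 3, where the term $\hat{\Delta}^\top\hat{\Sigma}_{0\gamma}^{-1}(\bl y_0 - \hat{\bl M}_0\hat{\bs\theta}_{0\gamma})$ involves $\hat{\Sigma}_{0\gamma}^{-1}$ in a way that is not scale-invariant, so your rescaling device would not carry over there without modification.
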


The least squares connections developed above can also be extended to the problem of prediction in the source apportionment model. Suppose that instead of observing all elements of the $p$-dimensional DOM profile $\bl y$, we only observe a $q$-dimensional subvector. In a practical setting, this might happen if a downstream DOM profile is measured only on a subset of the excitation frequencies used to measure the profiles in the dictionary. Partition $\bl y$ into observed and unobserved components $\bl y_0 \in \mathbb{R}^{p-q}, \bl y' \in \mathbb{R}^q$, and consider the problem of predicting $\bl y'$ from $\bl X$ and $\bl y_0$.

Assume that $\bl y$ follows the partitioned source apportionment model
\begin{equation*}
\begin{aligned}
    (\Exp{\bl y_0}, \Exp{\bl y'}) &= (\bl M_0 \bs \theta, \bl M' \bs \theta) \\
    \Var{\bl y}  &= \left[ \begin{array}{cc}
         \Sigma_0 & \Delta \\
         \Delta^\top & \Sigma'
    \end{array}\right],
\end{aligned}
\end{equation*}
where $\Sigma_0$ is $(p-q)\times (p-q)$, $\Delta$ is $q\times (p-q)$, and $\Sigma'$ is $q \times q$. The best linear unbiased predictor for the unobserved portion of the downstream profile is then 
\begin{equation}\label{eq:blup}
    \hat{\bl y}' = \bl M' \hat{\bs \theta} + \Delta^\top \Sigma_0^{-1}(\bl y_0 - \bl M_0 \hat{\bs \theta}),
\end{equation}
where $\hat{\bs \theta} = (\bl M_0^\top \Sigma_0^{-1} \bl M_0)^{-1} \bl M_0^\top \Sigma_0^{-1} \bl y_0$ \citep{kariya_generalized_2004}. As before, we can obtain a feasible version of $\hat{\bl y}'$ by using the dictionary of DOM profiles to create substitutes for the unknowns in \eqref{eq:blup}. Partition the dictionary in the same manner as $\bl y$, and let
\begin{equation*}
\begin{aligned}
\hat{\bl M} &= (\hat{\bl M}_0, \hat{\bl M}') \\
\hat{\Sigma}_{\gamma} &= \left[ \begin{array}{cc}
         \hat{\Sigma}_{0 \gamma} & \hat{\Delta} \\
         \hat{\Delta}^\top & \hat{\Sigma}_\gamma'
    \end{array}\right]
\end{aligned}
\end{equation*}
be the corresponding partitions of $\hat{\bl M}, \hat{\Sigma}_{\gamma}$. Then define the feasible predictor
\begin{equation*}
    \hat{\bl y}_\gamma' = \hat{\bl M}' \hat{\bs \theta}_{0 \gamma} + \hat{\Delta}^\top \hat{\Sigma}_{0\gamma}^{-1}(\bl y_0 - \hat{\bl M}_0 \hat{\bs \theta}_{0 \gamma}),
\end{equation*}
where $\hat{\bs \theta}_{0 \gamma} = (\hat{\bl M}_0^\top \hat{\Sigma}_{0\gamma}^{-1} \hat{\bl M}_0)^{-1} \hat{\bl M}_0^\top \hat{\Sigma}_{0\gamma}^{-1} \bl y_0$. In analogy to Proposition \ref{prop:2}, we obtain simple limiting expressions for $\hat{\bl y}_\gamma'$.

\begin{proposition}
\begin{equation*}
    \lim_{\gamma \rightarrow 0}~ \hat{\bl y}_\gamma' = \bl X' (\bl X_0^\top \bl X_0)^{-1} \bl X_0^\top \bl y_0.
\end{equation*}
Also,
\begin{equation*}
    \lim_{\gamma \rightarrow \infty}~ \hat{\bl y}_\gamma' = \hat{\bl M}' (\hat{\bl M}_0^\top \hat{\bl M}_0)^{-1} \hat{\bl M}_0^\top \bl y_0 .
\end{equation*}
\end{proposition}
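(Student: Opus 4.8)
My plan is to establish the two limits separately, exploiting the block structure that the partition of $\bl y$ induces on $\bl S = \bl E\bl E^\top$. First I would note that the feasible predictor is invariant to rescaling $\hat\Sigma_\gamma$: for each fixed $\gamma$ the proportionality constant cancels in $\hat{\bs\theta}_{0\gamma}$ and in the product $\hat\Delta^\top\hat\Sigma_{0\gamma}^{-1}$, so I may take $\hat\Sigma_\gamma = \bl S + \gamma\bl I_p$. Partitioning $\bl E = [\bl E_0^\top\ \bl E'^\top]^\top$ conformably with $\bl y = (\bl y_0,\bl y')$, where $\bl E_0 = \bl X_0\bl N$ and $\bl E' = \bl X'\bl N$, the relevant blocks are $\hat\Sigma_{0\gamma} = \bl E_0\bl E_0^\top + \gamma\bl I_{p-q}$ and $\hat\Delta^\top = \bl E'\bl E_0^\top$. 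The limit $\gamma\to\infty$ is then immediate: since $\hat\Sigma_{0\gamma}^{-1}\sim\gamma^{-1}\bl I_{p-q}$, the factors of $\gamma^{-1}$ cancel in $\hat{\bs\theta}_{0\gamma}$ to leave $(\hat{\bl M}_0^\top\hat{\bl M}_0)^{-1}\hat{\bl M}_0^\top\bl y_0$, while $\hat\Delta^\top\hat\Sigma_{0\gamma}^{-1}\to\bl 0$ kills the correction term, yielding the stated limit $\hat{\bl M}'(\hat{\bl M}_0^\top\hat{\bl M}_0)^{-1}\hat{\bl M}_0^\top\bl y_0$.

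The substance is in the $\gamma\to0$ limit, where $\hat\Sigma_{0\gamma}\to\bl E_0\bl E_0^\top$ is singular. Writing the spectral decomposition of $\bl E_0\bl E_0^\top$, I would show $\gamma\hat\Sigma_{0\gamma}^{-1} = (\bl I_{p-q} - \bl P_{\bl E_0}) + \gamma\bl B_\gamma$ with $\bl B_\gamma$ uniformly bounded and $\bl P_{\bl E_0} = \bl E_0(\bl E_0^\top\bl E_0)^{-1}\bl E_0^\top$. Multiplying both $\hat{\bl M}_0^\top\hat\Sigma_{0\gamma}^{-1}\hat{\bl M}_0$ and $\hat{\bl M}_0^\top\hat\Sigma_{0\gamma}^{-1}\bl y_0$ by $\gamma$ isolates the divergent direction and gives, in the limit, $\hat{\bs\theta}_{0,0} := (\hat{\bl M}_0^\top(\bl I_{p-q} - \bl P_{\bl E_0})\hat{\bl M}_0)^{-1}\hat{\bl M}_0^\top(\bl I_{p-q} - \bl P_{\bl E_0})\bl y_0$. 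Reusing the argument behind Propositions \ref{prop:1} and \ref{prop:2} on the observed block — the \cite{seber_linear_2003} Theorem 3.6(i) identity together with Proposition \ref{prop:1} applied to $\bl X_0$ — I would identify this with the observed-block RTS estimate $\hat{\bs\theta}_{0,0} = \bl A^\top\hat{\bs\beta}_0$, where $\hat{\bs\beta}_0 = (\bl X_0^\top\bl X_0)^{-1}\bl X_0^\top\bl y_0$.

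It then remains to assemble the correction term and recognize the total as $\bl X'\hat{\bs\beta}_0$. For the correction, the push-through identity $\bl E_0^\top(\bl E_0\bl E_0^\top + \gamma\bl I_{p-q})^{-1} = (\bl E_0^\top\bl E_0 + \gamma\bl I_{n-K})^{-1}\bl E_0^\top$ shows that $\hat\Delta^\top\hat\Sigma_{0\gamma}^{-1}\to\bl E'(\bl E_0^\top\bl E_0)^{-1}\bl E_0^\top$, a finite limit; this is the step that tames the otherwise divergent inverse. Hence the correction tends to $\bl E'(\bl E_0^\top\bl E_0)^{-1}\bl E_0^\top(\bl y_0 - \hat{\bl M}_0\hat{\bs\theta}_{0,0})$. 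Reading off the second block of the normal equations for the regression of $\bl y_0$ on $\bl Z_0 = [\hat{\bl M}_0\ \bl E_0]$ gives $(\bl E_0^\top\bl E_0)^{-1}\bl E_0^\top(\bl y_0 - \hat{\bl M}_0\hat{\bs\theta}_{0,0}) = \bl N^\top\hat{\bs\beta}_0$. Finally, decomposing $\bl X' = \hat{\bl M}'\bl A^\top + \bl E'\bl N^\top$ (from $\bl X'\bl P_{\bl A} = \hat{\bl M}'\bl A^\top$ and $\bl X'(\bl I_n - \bl P_{\bl A}) = \bl E'\bl N^\top$) turns the limiting predictor into $\hat{\bl M}'(\bl A^\top\hat{\bs\beta}_0) + \bl E'(\bl N^\top\hat{\bs\beta}_0) = \bl X'\hat{\bs\beta}_0$, which is the claim.

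The hard part will be the $\gamma\to0$ limit, precisely because $\hat\Sigma_{0\gamma}$ degenerates and $\hat\Sigma_{0\gamma}^{-1}$ blows up along the null space of $\bl E_0$; the key is to separate that divergent direction cleanly, which the multiply-by-$\gamma$ spectral decomposition accomplishes for $\hat{\bs\theta}_{0\gamma}$ and the push-through identity accomplishes for the correction term. Throughout I would assume the genericity conditions under which $\bl X_0^\top\bl X_0$, $\bl E_0^\top\bl E_0$, and $\hat{\bl M}_0^\top(\bl I_{p-q} - \bl P_{\bl E_0})\hat{\bl M}_0$ are invertible, which hold with probability one when $p - q \ge n$ and the source distributions are non-degenerate.
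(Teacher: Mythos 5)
Your proof is correct, but it reaches the $\gamma \to 0$ limit by a different technical route than the paper. The paper applies the Seber identity to write the residual exactly for every $\gamma$, $\bl y_0 - \hat{\bl M}_0 \hat{\bs \theta}_{0\gamma} = \hat{\Sigma}_{0\gamma} \bl U (\bl U^\top \hat{\Sigma}_{0\gamma} \bl U)^{-1} \bl U^\top \bl y_0$, and then constructs an explicit SVD-adapted basis for the null space of $\hat{\bl M}_0^\top$ ($\bl U_1$ from the left singular vectors of $\bl X_0 (\bl X_0^\top \bl X_0)^{-1} \bl N$, $\bl U_2$ from $\bl I - \bl P_{\bl X_0}$) so that $(\bl U^\top \hat{\Sigma}_{0\gamma} \bl U)^{-1}$ becomes block diagonal and both limits can be read off from closed-form expressions in $\bl D_1$ and $\gamma$. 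You instead split the two terms and treat each with a more elementary device: the spectral split $\gamma \hat{\Sigma}_{0\gamma}^{-1} \to \bl I_{p-q} - \bl P_{\bl E_0}$ for the coefficient (then identified with the observed-block RTS estimate via Seber 3.6(i) and Proposition \ref{prop:1} applied to $\bl X_0$, exactly paralleling how the paper instead just cites Proposition \ref{prop:2}), and the push-through identity $\bl E_0^\top(\bl E_0 \bl E_0^\top + \gamma \bl I)^{-1} = (\bl E_0^\top \bl E_0 + \gamma \bl I)^{-1} \bl E_0^\top$ for the correction factor, which gives the finite limit $\hat{\Delta}^\top \hat{\Sigma}_{0\gamma}^{-1} \to \bl E' (\bl E_0^\top \bl E_0)^{-1} \bl E_0^\top$ without ever constructing $\bl U$; your identification of the limiting correction with $\bl E' \bl N^\top \hat{\bs \beta}_0$ via the second block of the normal equations for $\bl Z_0 = [\hat{\bl M}_0 ~ \bl E_0]$ is valid, since the computation in the proof of Proposition \ref{prop:1} in fact yields the full vector $\hat{\bs \psi}_0 = [\bl A ~ \bl N]^\top \hat{\bs \beta}_0$, not only its first block. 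The final assembly via $\bl X' = \hat{\bl M}' \bl A^\top + \bl E' \bl N^\top$ coincides with the paper's last display, and your $\gamma \to \infty$ argument (scaling plus $\hat{\Delta}^\top \hat{\Sigma}_{0\gamma}^{-1} \to \bl 0$) matches the paper's in substance. What your route buys is economy: it avoids the basis construction and the block-inverse computation, makes the genericity requirements explicit ($\bl X_0$, hence $\bl E_0$ and $\bl Z_0$, of full column rank when $p - q \geq n$, a condition the paper leaves implicit though its formulas also require it), and interprets the limiting predictor transparently as fitted values $\bl X' \hat{\bs \beta}_0$ from the observed-block regression; what the paper's route buys is a single exact expression for the predictor path at every $\gamma > 0$, treating both limits, and both terms, within one unified decomposition.
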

We call these limiting predictors the RTS and ATR predictors, respectively. Like the ATR and RTS estimates, the ATR and RTS predictors can be explained in simple terms and can be computed using standard linear regression tools.

\section{Variability of the RTS estimates}\label{sec:var}

Because the RTS estimate is a feasible version of the actual GLS estimate in the source apportionment model, it is not, in general, optimal in terms of mean squared error, nor is it necessarily unbiased. However, the same can be said of the ATR estimate. The properties of each depend on the extent to which the feasible approximations $\hat{\bl M} \approx \bl M$ and $\hat{\Sigma}_{\gamma} \approx \Sigma$ hold. In this section, we first analyze the variability of the ATR and RTS estimates assuming an idealized model where the mean and variance of the downstream DOM profile can be described exactly using the dictionary profiles. Doing so offers some insight into when the RTS estimate may be less variable than the ATR estimate and allows us to derive simple standard errors for the RTS estimate. We then discuss what happens to the RTS standard errors in the general case. The variability we consider here is with respect to the variability in the latent DOM profiles only, as we assume the dictionary profiles to be fixed at their observed values.

Suppose that $\bl y$ follows a source apportionment model given by
\begin{equation}\label{eq:idealmod}
\begin{aligned}
    \Exp{\bl y} &= \hat{\bl M} \bs \theta \\
    \Var{\bl y} &= \| \bs \theta \|_2^2 \hat{\Sigma}_{\gamma},
\end{aligned}
\end{equation}
where $\hat{\bl M}, \hat{\Sigma}_{\gamma}$ are functions, as defined in the previous section, of a non-random dictionary $\bl X$ and a design-like matrix $\bl A$. In this model, both $\hat{\bs \theta}_{\mathrm{ATR}}$ and $\hat{\bs \theta}_{\mathrm{RTS}}$ are unbiased since they are both of the form $\bl C^\top \bl y$ for some matrix $\bl C \in \mathbb{R}^{p \times K}$ such that $\bl C^\top \hat{\bl M} = \bl I_K$. However, their variances differ. Recalling the definition of $\hat{\Sigma}_{\gamma}$, we have
\begin{equation}\label{eq:vars}
\begin{aligned}
    \Var{\hat{\bs \theta}_{\mathrm{ATR}}} &\propto (\hat{\bl M}^\top \hat{\bl M})^{-1} \hat{\bl M}^\top \bl S \hat{\bl M} (\hat{\bl M}^\top \hat{\bl M})^{-1} + \gamma (\hat{\bl M}^\top \hat{\bl M})^{-1} \\
    \Var{\hat{\bs \theta}_{\mathrm{RTS}}} &\propto  \gamma \bl A^\top (\bl X^\top \bl X)^{-1} \bl A, \\
\end{aligned}
\end{equation}
with respect to the same constant of proportionality. In the expression for the variance of the RTS estimate, the additional term involving $\bl S$ has vanished because $\bl S = \bl E \bl E^\top$, and 
\begin{equation*}
    \bl A^\top (\bl X^\top \bl X)^{-1} \bl X^\top \bl E = \bl A^\top (\bl X^\top \bl X)^{-1} \bl X^\top \bl X \bl N = \bl A^\top \bl N = \bl 0 .
\end{equation*}
Looking at \eqref{eq:vars}, it is clear that, at least in this idealized model, the variance of the RTS estimate can become arbitrarily small as $\gamma \rightarrow 0$. However, as a consequence of the matrix Cauchy-Schwarz inequality \citep{marshall_matrix_1990} (alternatively, a consequence of the Gauss-Markov Theorem \citep{aitken_ivleast_1936}), we have the following correspondence in the Loewner partial order
\begin{equation*}
    (\hat{\bl M}^\top \hat{\bl M})^{-1} \preceq \bl A^\top (\bl X^\top \bl X)^{-1} \bl A,
\end{equation*}
so the variance of the RTS estimate actually becomes greater than that of the ATR estimate as $\gamma \rightarrow \infty$. The next proposition gives the interval of values for $\gamma$ in which the RTS estimate outperforms the ATR estimate as a function of the various matrices in \eqref{eq:vars}.
\begin{proposition}\label{prop:atrvsrts}
    Let $\gamma >0$, and assume $\bl y$ follows the source apportionment model in \eqref{eq:idealmod}. Let
    \begin{equation*}
    \begin{aligned}
        \bl V_1 &= \bl A^\top (\bl X^\top \bl X)^{-1} \bl A - (\hat{\bl M}^\top \hat{\bl M})^{-1} \\
        \bl V_2 &= (\hat{\bl M}^\top \hat{\bl M})^{-1} \hat{\bl M}^\top \bl S \hat{\bl M} (\hat{\bl M}^\top \hat{\bl M})^{-1}.
    \end{aligned}
    \end{equation*}
    Then $\Var{\hat{\bs \theta}_{\mathrm{RTS}}} \preceq \Var{\hat{\bs \theta}_{\mathrm{ATR}}}$ if and only if $\gamma \leq \lambda_{\min}(\bl V_1^{-1} \bl V_2)$, where $\lambda_{\min}$ denotes the minimum eigenvalue.
\end{proposition}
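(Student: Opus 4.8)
The plan is to reduce the Loewner-order inequality between the two variances to a single matrix inequality of the form $\gamma \bl V_1 \preceq \bl V_2$, and then to characterize when that inequality holds using a congruence transformation by $\bl V_1^{-1/2}$ together with a similarity argument identifying the relevant eigenvalues. First I would substitute the expressions from \eqref{eq:vars}. Since both variances are proportional to the displayed matrices with respect to the \emph{same} positive constant $c$, and multiplication by $c>0$ preserves the Loewner order, the claim $\Var{\hat{\bs \theta}_{\mathrm{RTS}}} \preceq \Var{\hat{\bs \theta}_{\mathrm{ATR}}}$ is equivalent to
\begin{equation*}
    \gamma\, \bl A^\top (\bl X^\top \bl X)^{-1} \bl A \preceq \bl V_2 + \gamma\, (\hat{\bl M}^\top \hat{\bl M})^{-1}.
\end{equation*}
Rearranging and using the definition of $\bl V_1$ gives the compact equivalent form $\gamma \bl V_1 \preceq \bl V_2$.

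Next I would record the sign structure of the two matrices. The Loewner inequality $(\hat{\bl M}^\top \hat{\bl M})^{-1} \preceq \bl A^\top (\bl X^\top \bl X)^{-1} \bl A$ established just before the proposition shows $\bl V_1 \succeq 0$; since the proposition invokes $\bl V_1^{-1}$, I would note that $\bl V_1$ is in fact positive definite (generically true when the average profiles are not collinear with the residual structure), so that the symmetric square root $\bl V_1^{1/2}$ and its inverse $\bl V_1^{-1/2}$ exist. Likewise $\bl V_2 = (\hat{\bl M}^\top \hat{\bl M})^{-1} \hat{\bl M}^\top \bl S \hat{\bl M} (\hat{\bl M}^\top \hat{\bl M})^{-1} \succeq 0$ because $\bl S = \bl E \bl E^\top \succeq 0$. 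Congruence by the symmetric invertible matrix $\bl V_1^{-1/2}$ preserves the order, so
\begin{equation*}
    \gamma \bl V_1 \preceq \bl V_2 \iff \gamma \bl I_K \preceq \bl V_1^{-1/2} \bl V_2 \bl V_1^{-1/2}.
\end{equation*}
The right-hand matrix is symmetric, hence has real eigenvalues, and the inequality $\gamma \bl I_K \preceq \bl V_1^{-1/2} \bl V_2 \bl V_1^{-1/2}$ holds if and only if $\gamma \leq \lambda_{\min}\!\left(\bl V_1^{-1/2} \bl V_2 \bl V_1^{-1/2}\right)$.

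To finish, I would identify this minimum eigenvalue with the one in the statement. The matrix $\bl V_1^{-1/2} \bl V_2 \bl V_1^{-1/2}$ is similar to $\bl V_1^{-1} \bl V_2$ via $\bl V_1^{-1/2}\big(\bl V_1^{-1/2} \bl V_2 \bl V_1^{-1/2}\big)\bl V_1^{1/2} = \bl V_1^{-1} \bl V_2$, so the two share a spectrum and $\lambda_{\min}(\bl V_1^{-1/2} \bl V_2 \bl V_1^{-1/2}) = \lambda_{\min}(\bl V_1^{-1} \bl V_2)$, yielding the stated threshold. The main obstacle I anticipate is conceptual rather than computational: the matrix $\bl V_1^{-1} \bl V_2$ appearing in the statement is not symmetric in general, so one must justify that its ``minimum eigenvalue'' is a well-defined real number and that it governs a Loewner inequality. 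The congruence-to-symmetric-form step is precisely what resolves this, since it routes the comparison through the symmetric matrix $\bl V_1^{-1/2} \bl V_2 \bl V_1^{-1/2}$ whose eigenvalues coincide with those of $\bl V_1^{-1} \bl V_2$; a secondary point to pin down is the positive-definiteness (not merely semidefiniteness) of $\bl V_1$, which is implicitly required for $\bl V_1^{-1}$ and $\bl V_1^{-1/2}$ to exist.
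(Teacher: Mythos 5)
Your proof is correct and takes essentially the same route as the paper's: both reduce the Loewner comparison to $\gamma \bl V_1 \preceq \bl V_2$ and resolve it by simultaneously diagonalizing $\bl V_1$ and $\bl V_2$, with your explicit congruence by $\bl V_1^{-1/2}$ followed by the similarity $\bl V_1^{-1/2}\bigl(\bl V_1^{-1/2} \bl V_2 \bl V_1^{-1/2}\bigr)\bl V_1^{1/2} = \bl V_1^{-1} \bl V_2$ being precisely the standard proof of the simultaneous diagonalization lemma the paper cites from \cite{muirhead_aspects_2005}. Your flagging of the positive definiteness (not mere semidefiniteness) of $\bl V_1$ is apt, since the paper asserts it without comment and it is indeed required for the argument.
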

While the condition in Proposition \ref{prop:atrvsrts} cannot be checked directly because $\gamma$ is unknown, one can determine the range of $\gamma$ values favorable to the RTS estimate because $\lambda_{\min}(\bl V_1^{-1} \bl V_2)$ may be computed from the dictionary. The matrix $\bl V_1$ quantifies the gap between the RTS and ATR variance in the case of entirely isotropic error, and as the scale of this term grows, the region favorable to the RTS estimate shrinks. Recalling that $\bl S = \bl X (\bl I_n - \bl P_{\bl A}) \bl X^\top$, it can be shown that computing the matrix $\bl V_2$ is equivalent to first computing ATR coefficients on each dictionary element and then taking the sum of the source-wise covariance matrices of these coefficients. As the scale of $\bl V_2$ increases, the region favorable to the RTS estimate grows.

The expression for $\Var{\hat{\bs \theta}_{\mathrm{RTS}}}$ in \eqref{eq:vars} suggests that the matrix of squared standard errors
\begin{equation}\label{eq:sse}
    \mathrm{SSE}[\hat{\bs \theta}_{\mathrm{RTS}}] = \frac{(\bl y^\top (\bl I_p - \bl P_{\bl X}) \bl y)}{p - n} \bl A^\top (\bl X^\top \bl X)^{-1} \bl A
\end{equation}
may be used to estimate the variability of the RTS estimate. Assuming \eqref{eq:idealmod},  $\mathrm{SSE}[\hat{\bs \theta}_{\mathrm{RTS}}]$ is unbiased for $\Var{\hat{\bs \theta}_{\mathrm{RTS}}}$ because $\frac{(\bl y^\top (\bl I - \bl P_{\bl X}) \bl y)}{p - n}$ is an unbiased estimate of the magnitude of the isotropic component of the variance of $\bl y$. Specifically,
\begin{align*}
    \Exp{(\bl I_p - \bl P_{\bl X}) \bl y} &= (\bl I_p - \bl P_{\bl X}) \hat{\bl M} \bs \theta = \bl 0 \\
    \Var{(\bl I_p - \bl P_{\bl X}) \bl y} &= \| \bs \theta \|_2^2 (\bl I_p - \bl P_{\bl X}) \hat{\Sigma}_{\gamma} = c \gamma \| \bs \theta \|_2^2 (\bl I_p - \bl P_{\bl X}),
\end{align*}
for some constant $c > 0$. Therefore, $\Exp{(\bl y^\top (\bl I_p - \bl P_{\bl X}) \bl y)/(p-n)} = c \gamma \| \bs \theta \|_2^2$. When \eqref{eq:idealmod} does not hold, the difference between $\mathrm{SSE}[\hat{\bs \theta}_{\mathrm{RTS}}]$ and $\Var{\hat{\bs \theta}_{\mathrm{RTS}}}$ will depend on the relationship between the dictionary profiles and the unknowns $\bl M, \Sigma$. The next proposition characterizes the average behavior of $\mathrm{SSE}[\hat{\bs \theta}_{\mathrm{RTS}}]$ in a general source apportionment model in terms of three mutually orthogonal subspaces of $\mathbb{R}^p$.

\begin{proposition}\label{prop:sebias}
    Assume that $\bl y$ follows a general source apportionment model
    \begin{equation*}
        \begin{aligned}
        \Exp{\bl y} &= \bl M \bs \theta \\
        \Var{\bl y} &= \|\bs \theta \|_2^2 \Sigma .
        \end{aligned}
    \end{equation*}
    Let $v_k$ be the $k^{\text{th}}$ diagonal entry of $\Var{\hat{\bs \theta}_{\mathrm{RTS}}}$ and let $\hat{v}_k$ be the $k^{\text{th}}$ diagonal entry of $\mathrm{SSE}[\hat{\bs \theta}_{\mathrm{RTS}}]$. Also let
    \begin{itemize}
        \item $\bl U_1$ be the $p \times (n-k)$ matrix whose columns are the left singular vectors of $\bl E$.
        \item $\bl U_2$ be the $p \times k$ matrix whose columns are the left singular vectors of $\bl X (\bl X^\top \bl X)^{-1} \bl A$.
        \item $\bl U_3$ be the $p \times (p - n)$ matrix whose columns are the left singular vectors of $\bl P_{\bl X}$.
    \end{itemize}
    Then
    \begin{align}
     \Exp{v_k - \hat{v}_k} / \|\bs \theta \|_2^2 &\leq \bl a_k^\top (\bl X^\top \bl X)^{-1} \bl a_k[\lambda_{\max}(\bl U_2^\top \Sigma \bl U_2) - \bar{\lambda}(\bl U_3^\top \Sigma \bl U_3)]\label{line:ub} \\
     \Exp{v_k - \hat{v}_k} / \|\bs \theta \|_2^2 &\geq \bl a_k^\top (\bl X^\top \bl X)^{-1} \bl a_k[\lambda_{\min}(\bl U_2^\top \Sigma \bl U_2) - \bar{\lambda}(\bl U_3^\top \Sigma \bl U_3) - \bar{\lambda}(\bl U_3^\top \bl M \bl M^\top \bl U_3)] \label{line:lb}
    \end{align}
    where $\lambda_{\min}$ denotes the minimum eigenvalue, $\lambda_{\max}$ denotes the maximum eigenvalue, and $\bar{\lambda}$ denotes the average eigenvalue.
\end{proposition}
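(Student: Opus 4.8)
The plan is to compute $\Exp{v_k - \hat v_k}$ in closed form and then bound the resulting terms using the orthogonal decomposition of $\mathbb{R}^p$ furnished by $\bl U_1,\bl U_2,\bl U_3$. First I would write out the two quantities explicitly. Since $\hat{\bs\theta}_{\mathrm{RTS}} = \bl A^\top(\bl X^\top\bl X)^{-1}\bl X^\top\bl y$ and $\Var{\bl y} = \|\bs\theta\|_2^2\Sigma$, setting $\bl w_k = \bl X(\bl X^\top\bl X)^{-1}\bl a_k$ (the $k$th column of $\bl X(\bl X^\top\bl X)^{-1}\bl A$, where $\bl a_k$ is the $k$th column of $\bl A$) gives $v_k = \|\bs\theta\|_2^2\,\bl w_k^\top\Sigma\bl w_k$ together with the identity $\bl a_k^\top(\bl X^\top\bl X)^{-1}\bl a_k = \|\bl w_k\|_2^2 =: c_k$. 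For $\hat v_k$ I would apply the quadratic-form expectation $\Exp{\bl y^\top\bl B\bl y} = \tr(\bl B\,\Var{\bl y}) + \Exp{\bl y}^\top\bl B\,\Exp{\bl y}$ with $\bl B = \bl I_p - \bl P_{\bl X}$, which yields $\Exp{\hat v_k} = c_k\big[\|\bs\theta\|_2^2\tr((\bl I_p-\bl P_{\bl X})\Sigma) + \bs\theta^\top\bl M^\top(\bl I_p-\bl P_{\bl X})\bl M\bs\theta\big]/(p-n)$. Subtracting and dividing by $\|\bs\theta\|_2^2$ then expresses $\Exp{v_k-\hat v_k}/\|\bs\theta\|_2^2$ as a difference of three terms.

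Next I would translate each term into subspace language. By construction the columns of $\bl U_2$ are an orthonormal basis for $\mathrm{col}(\bl X(\bl X^\top\bl X)^{-1}\bl A)$, so $\bl w_k\in\mathrm{col}(\bl U_2)$; writing $\bl w_k = \bl U_2\bs z_k$ gives $\bl w_k^\top\Sigma\bl w_k = \bs z_k^\top(\bl U_2^\top\Sigma\bl U_2)\bs z_k$ with $\|\bs z_k\|_2^2 = c_k$, and the Rayleigh bounds sandwich this between $c_k\lambda_{\min}(\bl U_2^\top\Sigma\bl U_2)$ and $c_k\lambda_{\max}(\bl U_2^\top\Sigma\bl U_2)$. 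For the second term I would use $\bl I_p - \bl P_{\bl X} = \bl U_3\bl U_3^\top$, since the columns of $\bl U_3$ span $\mathrm{col}(\bl X)^\perp$; hence $\tr((\bl I_p-\bl P_{\bl X})\Sigma) = \tr(\bl U_3^\top\Sigma\bl U_3) = (p-n)\bar\lambda(\bl U_3^\top\Sigma\bl U_3)$, so the second term is exactly $c_k\bar\lambda(\bl U_3^\top\Sigma\bl U_3)$. The third term becomes $c_k\,\bs\theta^\top\bl M^\top\bl U_3\bl U_3^\top\bl M\bs\theta/[(p-n)\|\bs\theta\|_2^2]$, a nonnegative multiple of $c_k$.

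Assembling these, the upper bound follows by taking the Rayleigh upper bound on the first term and discarding the nonpositive third term; this gives $c_k[\lambda_{\max}(\bl U_2^\top\Sigma\bl U_2) - \bar\lambda(\bl U_3^\top\Sigma\bl U_3)]$, matching \eqref{line:ub}. The lower bound follows by taking the Rayleigh lower bound on the first term and upper-bounding the third term's Rayleigh quotient by $\lambda_{\max}(\bl M^\top\bl U_3\bl U_3^\top\bl M)$, which is at most $\tr(\bl U_3^\top\bl M\bl M^\top\bl U_3) = (p-n)\bar\lambda(\bl U_3^\top\bl M\bl M^\top\bl U_3)$ because that matrix is positive semidefinite; this recovers \eqref{line:lb}.

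I expect the main obstacle to be careful bookkeeping rather than a deep inequality. The key verifications are the two decomposition claims, namely that $\bl w_k$ is orthogonal to $\mathrm{col}(\bl E)$ (since $\bl E^\top\bl w_k = \bl N^\top\bl a_k = \bl 0$) so that $\bl w_k$ lands in $\mathrm{col}(\bl U_2)$, and that $\bl I_p-\bl P_{\bl X}=\bl U_3\bl U_3^\top$; together with keeping the trace-versus-maximum-eigenvalue step straight, which is exactly what produces the asymmetry between the two bounds, with the $\bl M$ term entering only the lower bound. It is worth noting that $\bl U_1$ never appears in the final estimates: its sole role is to complete the orthogonal basis of $\mathbb{R}^p$, reflecting the fact that the residual directions spanned by $\bl E$ contribute to neither $v_k$ nor $\hat v_k$.
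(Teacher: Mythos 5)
Your proposal is correct and takes essentially the same route as the paper's proof: both compute $v_k$ and $\Exp{\hat{v}_k}$ via the quadratic-form expectation, exploit $\bl w_k = \bl X(\bl X^\top \bl X)^{-1}\bl a_k \in \mathrm{col}(\bl U_2)$ to reduce the first term to a Rayleigh-quotient bound on $\bl U_2^\top \Sigma \bl U_2$, use $\bl I_p - \bl P_{\bl X} = \bl U_3 \bl U_3^\top$ and the trace-to-average-eigenvalue identity for the second term, and obtain the two bounds by respectively discarding and trace-bounding the mean term $\bs\theta^\top \bl M^\top \bl U_3 \bl U_3^\top \bl M \bs\theta$ (your $\lambda_{\max} \leq \tr$ step matching the paper's $\bs\theta\bs\theta^\top \preceq \|\bs\theta\|_2^2 \bl I_K$ step). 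If anything, your treatment of the $\bl U_2$ substitution as an exact identity via $\bl w_k = \bl U_2 \bs z_k$ with $\|\bs z_k\|_2^2 = c_k$ is slightly cleaner than the paper's, which writes that substitution as a pair of inequalities even though it holds with equality.
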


To interpret Proposition \ref{prop:sebias}, first note that the columns of $\bl U_1$ form an orthonormal basis for the subspace spanned by the first $n-k$ singular vectors of $\hat{\Sigma}_{\gamma}$. The magnitude of $\Sigma$ lying in this subspace, the principal subspace of the feasible approximation to $\Sigma$, contributes nothing to the bias of the squared standard errors. Instead, the bias depends on the magnitude of $\Sigma$ lying along the remaining, orthogonal, directions described by the columns of $\bl U_2$ and $\bl U_3$. The term $\bar{\lambda}(\bl U_3^\top \bl M \bl M^\top \bl U_3)$ is the average squared residual between $\bl M$ and its projection onto the column space of $\bl X$. The more the columns of $\bl M$ lie in the column space of the dictionary DOM profiles, the more this term approaches 0. The upper bound in \eqref{line:ub} can be interpreted as measuring both the non-isotropy of $\bl U_2^\top \Sigma \bl U_2$ and the difference between the magnitudes of $\bl U_2^\top \Sigma \bl U_2$ and $\bl U_3^\top \Sigma \bl U_3$. The lower bound in \eqref{line:lb} (excluding $\bar{\lambda}(\bl U_3^\top \bl M \bl M^\top \bl U_3)$) has the same interpretation. If the variability described by $\Sigma$ in the directions orthogonal to $\bl U_1$ is isotropic and the columns of $\bl M$ can be written as linear combinations of the columns of $\bl X$, then $\mathrm{SSE}[\hat{\bs \theta}_{\mathrm{RTS}}]$ will be unbiased for $\Var{\hat{\bs \theta}_{\mathrm{RTS}}}$. This is of course the case when $\bl M = \hat{\bl M}$ and $\Sigma = \hat{\Sigma}_{\gamma}$.

To conclude, we note that none of the analyses in this section nor those in the previous section depend on the particular structure of $\bl A$ described in \eqref{eq:design}. In the Neuse River dataset, each dictionary EEM comes from a single land use source, so the corresponding $\bl A$ contains only zeros and ones. However, experimental conditions in other source apportionment problems may permit the collection of a dictionary with mixed elements of known source proportions. The rows of the corresponding $\bl A$ will then be composed of these known mixing proportions, and the corresponding ``RTS" coefficients will be weighted sums of regression coefficients.

\section{Source apportionment in practice}\label{sec:data}

While least squares theory guarantees the superiority of the GLS estimate and predictor over those of OLS, the results from Section \ref{sec:var} show that the relative performance of the corresponding RTS and ATR quantities is model- and dictionary-dependent. In this section, we provide numerical evidence that the RTS estimate and predictor are indeed superior to the ATR estimate and predictor in the context of fluorescence spectroscopy measurements of DOM. The approach we take is to evaluate the properties of the ATR and RTS methods in the context of a realistic source apportionment model, for which the population-level quantities $\bl M$ and $\Sigma$ are derived from the 202 DOM profiles in the Neuse River dataset.

Let $\bl{X}$ be the $4891 \times 202$ matrix whose columns are the DOM profiles in the Neuse River dataset. The numerical results in this section are computed with respect to a source apportionment model for a downstream DOM profile that has mean and covariance
\begin{equation*}
\begin{aligned}
    \bl M \bs \theta &= \bl X \bl A (\bl A^{\top} \bl A)^{-1} \bs \theta \\
    \|\bs \theta\|_2^2 \Sigma &= \|\bs \theta \|_2^2 \left[ \frac{\nu^*}{n-K} \bl X (\bl I_n - \bl P_{\bl A}) \bl X^\top + \gamma^* \bl I_p \right],
\end{aligned}
\end{equation*}
where $\nu^*$ and $\gamma^*$ are positive scalars chosen to make $\Sigma$ equal to the optimal covariance estimator defined in \cite{ledoit_well-conditioned_2004} Eqn. 14. Within this model, we evaluate the ATR and RTS methods for each of 250 values of $\bs \theta$, and for each of 4 possible dictionaries, yielding a total of 1000 distinct points of evaluation. The $\bs \theta$ values, pictured in Figure \ref{fig:thetas}, are simulated independently from a $\mathrm{Dirichlet}(\bl 1_K / K)$ distribution, which has coverage near the boundaries of the $K$-dimensional probability simplex. Each dictionary is constructed by selecting a fraction, $\alpha$, of the DOM profiles from the total Neuse River dataset. The profiles are sampled uniformly at random from each source category, and then column-bound to form a dictionary matrix $\bl X_{\alpha}$ for each of $\alpha \in \{0.25, 0.5, 0.75, 0.95\}$.

\begin{figure}
    \centering
    \includegraphics[scale=0.45]{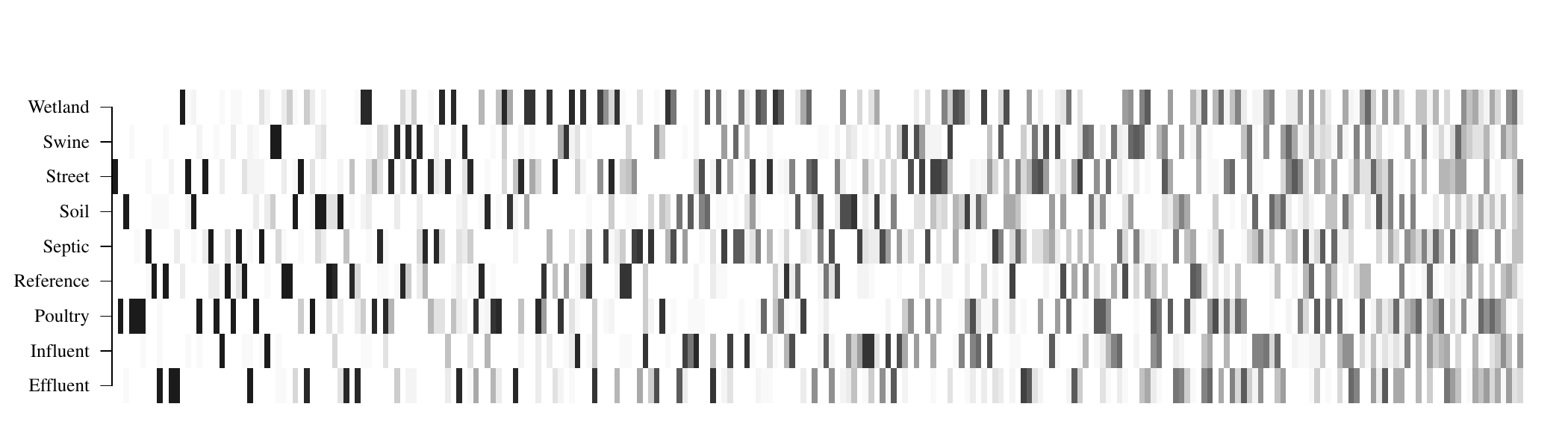}
    \caption{Depiction of $\bs{\theta}$ values at which ATR and RTS estimates and predictors are evaluated. There are $K=9$ source categories, each corresponding to a land use source. Darker/lighter gray signifies higher/lower coefficient weight. The $\boldsymbol{\theta}$'s all have positive entries that sum to $1$ and are arranged in increasing order of entropy, from left to right.}
    \label{fig:thetas}
\end{figure}

As $\alpha$ increases, the dictionary matrix $\bl X_{\alpha}$ explains more of the variation in the population mean and covariance of the source apportionment model considered in this study. The different values of $\alpha$ therefore allow us to observe what happens as the feasible ATR and RTS estimates approach the oracle OLS and GLS estimates, which require knowledge of $\bl M$ and $\Sigma$. To be precise, as $\alpha \rightarrow 1$ the properties of the RTS estimate actually approach those attained in the ideal model \eqref{eq:idealmod}. However, as the optimal $\gamma^*$ in this study is quite small, these are very close to the properties of the oracle GLS estimates.

\begin{figure}
    \centering
    \includegraphics[scale=0.6]{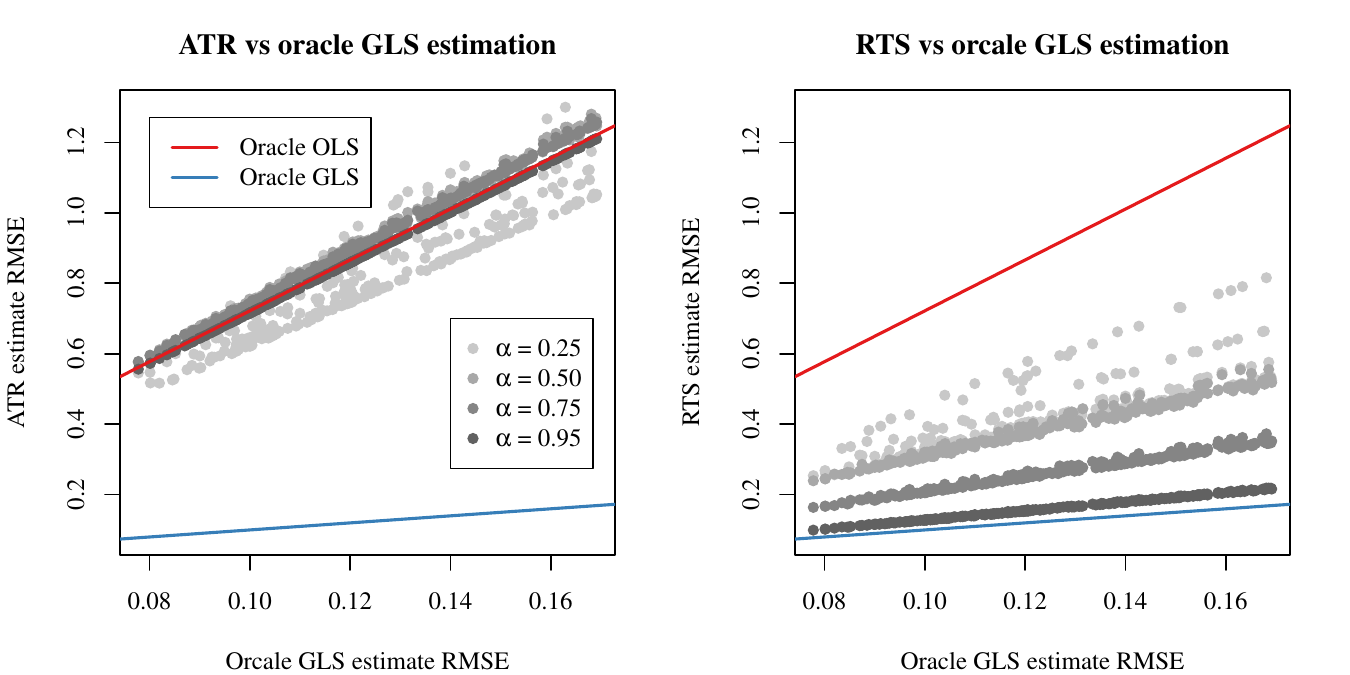}
    \caption{Performance of the ATR and RTS estimates in the numerical study as measured by RMSE. Each point corresponds to a different value of $\bs \theta$ and $\alpha$. The RMSE of the RTS estimates is lower than that of the ATR estimates for all values of $\bs \theta$ and $\alpha$. As $\alpha$ increases, the RTS RMSE continues to decrease to that of oracle GLS, while the ATR RMSE stabilizes around that of oracle OLS.}
    \label{fig:bias_variance}
\end{figure}

As seen in Figure \ref{fig:bias_variance}, the square root of the mean squared error (RMSE), $\sqrt{\mathrm{E}[\|\hat{\bs{\theta}} - \bs{\theta}\|_2^2]}$, of the ATR and RTS estimates gets closer to the RMSE attained by their oracle counterparts as $\alpha$ increases. Importantly, the RMSE of the RTS estimate is lower than that of the ATR estimate for all values of $\bs \theta$, even when $\alpha = 0.25$, suggesting that the RTS estimate should be preferred to the ATR estimate when applied to fluorescence profiles of DOM. The fact that the points in Figure \ref{fig:bias_variance} seem to lie nearly along straight lines suggests that the RMSE of all of these estimates is dominated by the variance component (the variances of the ATR, RTS, oracle OLS, and oracle GLS estimates are all proportional to $\| \bs \theta \|$). This suggests that the superiority of the RTS estimate relative to the ATR estimate is due primarily to a reduction in variance, which is consistent with its connections to GLS.

\begin{figure}
    \centering
    \includegraphics[scale=0.65]{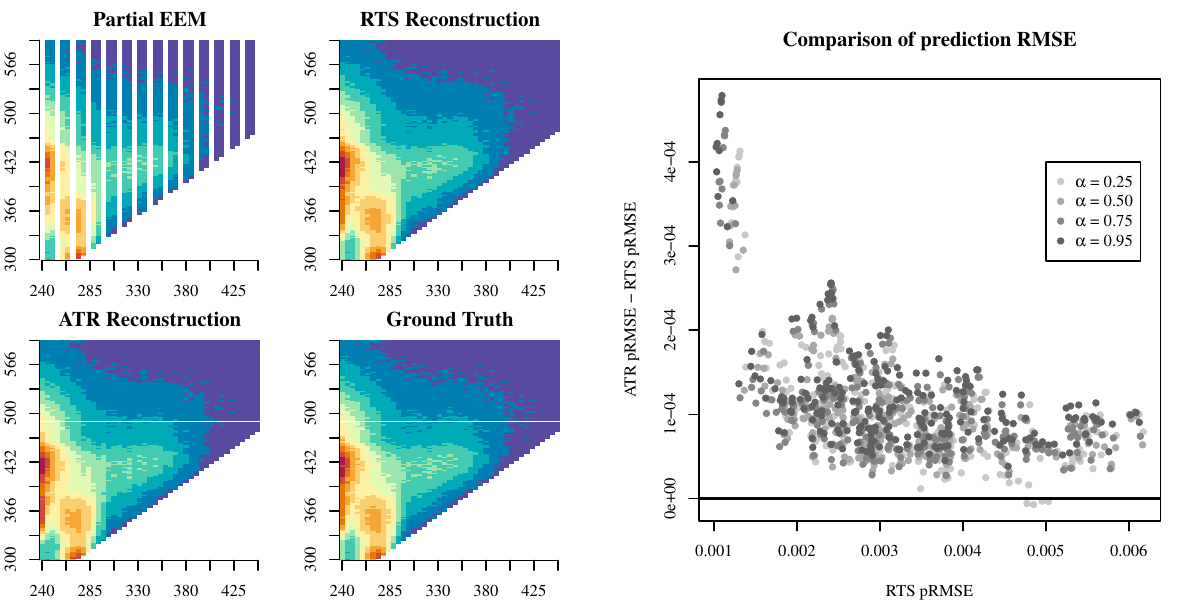}
    \caption{Performance of the ATR and RTS predictors. On the left, raster images of an example partial, ATR/RTS reconstructed, and true simulated EEM. On the right, the $y$-axis is the difference between the prediction RMSE of the ATR predictor and that of the RTS predictor. The $x$-axis is the prediction RMSE of the RTS predictor. Each point corresponds to a different value of $\bs \theta$ and $\alpha$.}
    \label{fig:prediction}
\end{figure}

A similar phenomenon is seen regarding the ATR and RTS predictors. To assess these, we imagine that a downstream DOM profile is only scanned at 28 out of the 43 excitation wavelengths used to create the Neuse River dataset (see Figure \ref{fig:prediction}, left). As described in Section 3, the prediction task is then to reconstruct the full profile from the partially observed downstream profile and the fully observed dictionary profiles. Using the same evaluation points as before, we compute prediction RMSE $\sqrt{\Exp{\|\bl y' - \hat{\bl{y}}'\|_2^2}}$ for each value of $\bs \theta$ and $\alpha$. While visually there are only minor differences between the ATR and RTS reconstructions in the left panel of Figure \ref{fig:prediction}, it is clear from the right panel of Figure \ref{fig:prediction} that the RTS predictor has lower prediction RMSE than the ATR predictor for nearly all values of $\bs \theta$ and $\alpha$.

\begin{figure}
    \centering
    \includegraphics[scale=0.6]{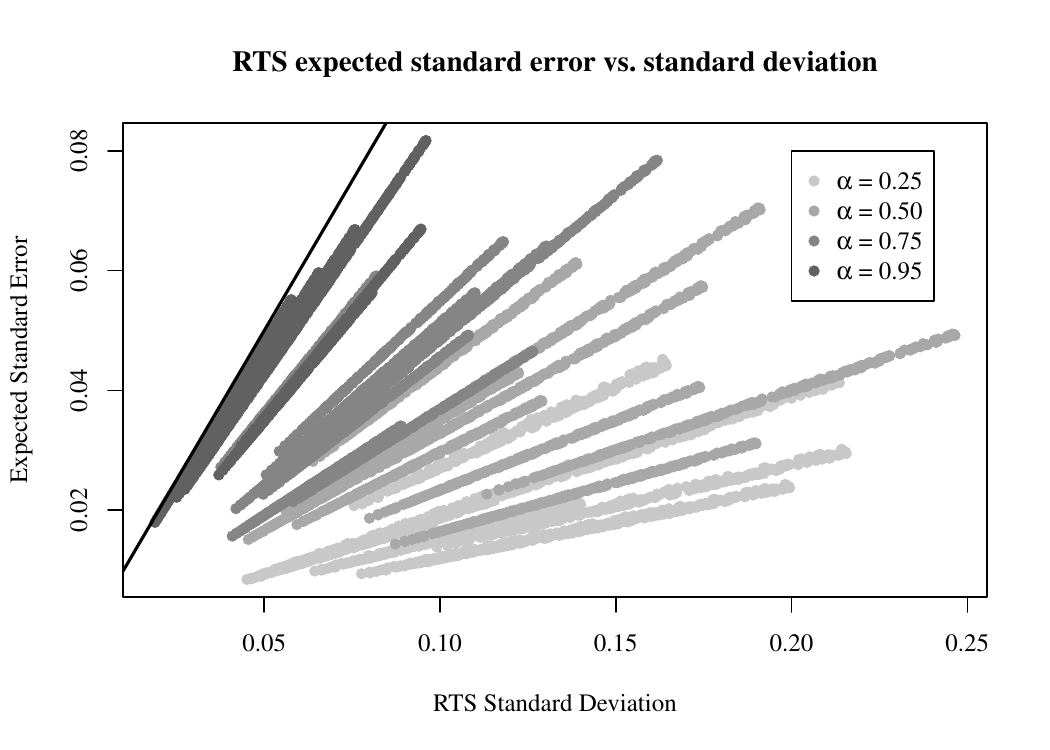}
    \caption{Expected standard error of the RTS estimate versus its standard deviation. There is one point per value of $\bs \theta$, $\alpha$, and source category, which produces the effect of having $K=9$ visually distinct trajectories for each value of $\alpha$. Our standard errors are biased downwards for all values of $\alpha$, though the bias decreases as $\alpha$ increases.}
    \label{fig:se}
\end{figure}

The results concerning our proposed standard errors for the RTS estimate are pictured in Figure \ref{fig:se}. The standard errors are biased downwards, meaning they tend to underestimate the standard deviation of the RTS estimate, for all values of $\alpha$. At $\alpha = 0.25$, the scale of the bias is quite large relative to the standard deviation. However, at $\alpha = 0.75$ and $\alpha = 0.95$ the standard errors begin to give a more accurate sense of the true variability of the RTS estimate. When $\alpha = 1$ (not pictured), the standard errors are unbiased, as discussed in Section \ref{sec:var}.

\section{Discussion} \label{sec:discussion}

The source apportionment model is a latent variable model for DOM profiles collected downstream of known land-use sources. In the context of the source apportionment model, least squares theory implies the existence of an optimal linear estimate of source proportions, the GLS estimates, which requires knowledge of the source-specific mean DOM profiles and covariance matrices. Given a dictionary of DOM profiles collected from the same land-use sources that contribute to the downstream profile, a feasible version of this optimal estimate, the RTS estimate, may be computed using the tools of simple linear regression. While the RTS estimate is not guaranteed to be optimal in the source apportionment model, our numerical results suggest that the RTS estimate has similar behavior to its oracle GLS counterpart when applied to fluorescence spectroscopy measurements of DOM. Similarly, the RTS predictor behaves like the oracle GLS predictor.

As discussed in Section \ref{sec:var}, the bias in our proposed RTS squared standard errors results from a discrepancy between the matrices $\bl M$ and $\Sigma$ and their dictionary-derived approximations. A promising direction for debiasing these squared standard errors is to try to estimate the bias components in Proposition \ref{prop:sebias} from the dictionary, perhaps using disjoint subsets of the dictionary profiles. However, a full account of such an approach should consider the randomness in the dictionary profiles, and remains a direction for future research. 

The classical least squares framework used in this article provides a prescription for how to use a DOM profile dictionary to solve the source apportionment problem. However, it ignores the non-negative nature of fluorescence spectroscopy data, and places no non-negativity restrictions on the estimated source proportions. Another interesting research direction is to study the properties of positive analogues of the OLS, GLS, ATR, and RTS estimates computed using non-negative least squares regression \citep{lawson_solving_1995}, and to determine the extent to which the results of classical regression still apply.

The Neuse River dataset, proofs of the propositions in this article, and software to replicate the figures in this article are available as supplementary files.

\bibliographystyle{chicago}

\bibliography{refs}


\pagebreak

\section{Proofs}

\begin{proof}[Proof of Proposition \ref{prop:1}]
Recall that $\bl N \bl N^\top = \bl I_n - \bl P_{\bl A}$, $\bl A^\top \bl N = \bl 0$. From these, we derive
\begin{align*}
    \bl A^\top \left[ \begin{array}{cc}
         \bl A (\bl A^\top \bl A)^{-1} & \bl N
    \end{array}\right] &= \left[ \begin{array}{cc}
         \bl I_K & \bl 0
    \end{array}\right]
\end{align*}
and
\begin{align*}
    \left[ \begin{array}{cc}
         \bl A (\bl A^\top \bl A)^{-1} & \bl N
    \end{array}\right]\left[ \begin{array}{cc}
         \bl A & \bl N
    \end{array}\right]^\top = \bl I_n .
\end{align*}
Therefore, 
\begin{align*}
    \hat{\bs \psi}[1:K] &= \left[ \begin{array}{cc}
         \bl I_K & \bl 0
    \end{array}\right] \hat{\bs \psi} \\
    &= \bl A^\top \left[ \begin{array}{cc}
         \bl A (\bl A^\top \bl A)^{-1} & \bl N
    \end{array}\right] (\bl Z^\top \bl Z)^{-1} \bl Z^\top \bl y \\
    &= \bl A^\top \left[ \begin{array}{cc}
         \bl A (\bl A^\top \bl A)^{-1} & \bl N
    \end{array}\right] \left(\left[ \begin{array}{c}
         \bl (\bl A^\top \bl A)^{-1} \bl A^\top \\
         \bl N^\top
    \end{array}\right] \bl X^\top \bl X \left[ \begin{array}{cc}
         \bl A (\bl A^\top \bl A)^{-1} & \bl N
    \end{array}\right]\right)^{-1} \bl Z^\top \bl y \\
    &= \bl A^\top \left[ \begin{array}{cc}
         \bl A (\bl A^\top \bl A)^{-1} & \bl N
    \end{array}\right] \left[ \begin{array}{c}
         \bl A^\top \\
         \bl N^\top
    \end{array}\right] (\bl X^\top \bl X)^{-1} \left[ \begin{array}{cc}
         \bl A & \bl N
    \end{array}\right] \bl Z^\top \bl y \\
    &= \bl A^\top (\bl X^\top \bl X)^{-1} \left[ \begin{array}{cc}
         \bl A & \bl N
    \end{array}\right] \left[ \begin{array}{c}
         \bl (\bl A^\top \bl A)^{-1} \bl A^\top \\
         \bl N^\top
    \end{array}\right] \bl X^\top \bl y \\
    &= \bl A^\top (\bl X^\top \bl X)^{-1} \bl X^\top \bl y \\ 
    \end{align*}
\end{proof}

\begin{proof}[Proof of Proposition \ref{prop:2}]
    Apply the \cite{seber_linear_2003} identity to find
    \begin{equation*}
        \bl I_p - \hat{\bl M} (\hat{\bl M}^\top \hat{\Sigma}_{\gamma}^{-1} \hat{\bl M})^{-1} \hat{\bl M}^\top \hat{\Sigma}_{\gamma}^{-1} = \hat{\Sigma}_{\gamma} \bl U (\bl U^\top \hat{\Sigma}_{\gamma} \bl U)^{-1} \bl U^\top,
    \end{equation*}
    where $\bl U$ is a $p \times (p - k)$ matrix whose columns form an orthonormal basis for the null space of $\hat{\bl M}^\top$. So
    \begin{equation*}
    \begin{aligned}
        \hat{\bs \theta}_{\gamma} &= (\hat{\bl M}^\top \hat{\bl M})^{-1} \hat{\bl M}^\top \bl y - (\hat{\bl M}^\top \hat{\bl M})^{-1} \hat{\bl M}^\top \hat{\Sigma}_{\gamma} \bl U (\bl U^\top \hat{\Sigma}_{\gamma} \bl U)^{-1} \bl U^\top \bl y \\
        &= (\hat{\bl M}^\top \hat{\bl M})^{-1} \hat{\bl M}^\top \bl y - (\hat{\bl M}^\top \hat{\bl M})^{-1} \hat{\bl M}^\top \bl E \bl E^\top \bl U (\bl U^\top \hat{\Sigma}_{\gamma} \bl U)^{-1} \bl U^\top \bl y
    \end{aligned}
    \end{equation*}
    As $\gamma \rightarrow \infty$, the second term goes to zero,
    so
    \begin{equation*}
        \lim_{\gamma \rightarrow \infty }~\hat{\bs \theta}_{\gamma} = \hat{\bs \theta}_{\mathrm{ATR}}.
    \end{equation*}
    Now, write $\bl U$ as
    \begin{equation*}
    \bl U = \left[ \begin{array}{cc}
    \bl U_1 & \bl U_2
    \end{array} \right]
    \end{equation*}
    where the $n-k$ columns of $\bl U_1$ are the left singular vectors of $\bl X (\bl X^\top \bl X)^{-1} \bl N$, and the $p - n$ columns of $\bl U_2$ are the left singular vectors of $(\bl I_p - \bl P_{\bl X})$. This is a valid choice for $\bl U$ because
    \begin{equation*}
    \begin{aligned}
	   \hat{\bl M}^\top \bl X (\bl X^\top \bl X)^{-1} \bl N &= \bl 0 \\
	   \hat{\bl M}^\top (\bl I_p - \bl P_{\bl X}) &= \bl 0,
    \end{aligned}
    \end{equation*}
    implying that the column spaces of $\bl U_1$ and $\bl U_2$ are both in the null space of $\hat{\bl M}^\top$, and $(\bl I_p - \bl P_{\bl X})\bl X (\bl X^\top \bl X)^{-1} \bl N = \bl 0$, which implies that $\bl U_2^\top \bl U_1 = \bl 0$. Let the singular value decomposition of $\bl X (\bl X^\top \bl X)^{-1} \bl N$ be $\bl U_1 \bl D_1 \bl V_1^\top$. A direct calculation shows that
    \begin{equation*}
    (\bl U^\top \hat{\Sigma}_{\gamma} \bl U)^{-1} = \left[\begin{array}{cc}
    \bl D_1^{-2} + \gamma \bl I_{n-k} & \bl 0 \\
    \bl 0 & \gamma \bl I_{p-n}
    \end{array} \right]^{-1}.
    \end{equation*}
    Also, $\bl E^\top \bl U = [\bl V_1 \bl D_1^{-1} ~ \bl 0]$. So
    \begin{equation*}
        \begin{aligned}
            \bl E^\top \bl U (\bl U^\top \hat{\Sigma}_{\gamma} \bl U)^{-1} \bl U^\top &= \bl V_1 \bl D_1^{-1} (\bl D_1^{-2} + \gamma \bl I_{n-k})^{-1} \bl U_1^\top \bl y.
        \end{aligned}
    \end{equation*}
    Hence, as $\gamma \rightarrow 0$, we have
    \begin{equation*}
        \begin{aligned}
            \lim_{\gamma \rightarrow 0}~ \hat{\bs \theta}_{\gamma} &= (\hat{\bl M}^\top \hat{\bl M})^{-1} \hat{\bl M}^\top \bl y - (\hat{\bl M}^\top \hat{\bl M})^{-1} \hat{\bl M}^\top \bl E \bl V_1 \bl D_1 \bl U_1^\top \bl y \\
            &= (\hat{\bl M}^\top \hat{\bl M})^{-1} \hat{\bl M}^\top \bl y - (\hat{\bl M}^\top \hat{\bl M})^{-1} \hat{\bl M}^\top \bl X (\bl I_n - \bl P_{\bl A}) (\bl X^\top \bl X)^{-1} \bl X^\top \bl y \\
            &= (\hat{\bl M}^\top \hat{\bl M})^{-1} \hat{\bl M}^\top (\bl I_p - \bl P_{\bl X}) \bl y + (\hat{\bl M}^\top \hat{\bl M})^{-1} \hat{\bl M}^\top \bl X \bl A (\bl A^\top \bl A)^{-1} \bl A^\top (\bl X^\top \bl X)^{-1} \bl X^\top \bl y \\
            &= \bl A^\top (\bl X^\top \bl X)^{-1} \bl X^\top \bl y \\
            &= \hat{\bs \theta}_{\mathrm{RTS}}
        \end{aligned}
    \end{equation*}
\end{proof}

\begin{proof}[Proof of Proposition 3]
	Some of the steps here are similar to those in the proof of Proposition \ref{prop:2}. From the definition of $\hat{\bl y}_{\gamma}'$, we have
	\begin{equation*}
    \hat{\bl y}_\gamma' = \hat{\bl M}' \hat{\bs \theta}_{0 \gamma} + \hat{\Delta}^\top \hat{\Sigma}_{0\gamma}^{-1}(\bl y_0 - \hat{\bl M}_0 \hat{\bs \theta}_{0 \gamma}).
\end{equation*}
The limiting behavior of the first term in the sum can be deduced directly from Proposition \ref{prop:2}:
\begin{equation*}
	\lim_{\gamma \rightarrow 0}~ \hat{\bl M}' \hat{\bs \theta}_{0 \gamma} = \hat{\bl M}' \bl A^\top (\bl X_0^\top \bl X_0)^{-1} \bl X_0^\top \bl y_0 
\end{equation*}
and
\begin{equation*}
	\lim_{\gamma \rightarrow \infty}~ \hat{\bl M}' \hat{\bs \theta}_{0 \gamma} = \hat{\bl M}' (\hat{\bl M}_0^\top \hat{\bl M}_0)^{-1} \hat{\bl M}_0^\top \bl y_0 
\end{equation*}
For the second term, observe that
\begin{equation*}
\begin{aligned}
	\bl y_0 - \hat{\bl M}_0 \hat{\bs \theta}_{0 \gamma} &= \bl y_0 - \hat{\bl M}_0 (\hat{\bl M}_0^\top \hat{\Sigma}_{0\gamma}^{-1} \hat{\bl M}_0)^{-1} \hat{\bl M}_0^\top \hat{\Sigma}_{0\gamma}^{-1} \bl y_0 \\
	&= (\bl I_p - \hat{\bl M}_0 (\hat{\bl M}_0^\top \hat{\Sigma}_{0\gamma}^{-1} \hat{\bl M}_0)^{-1} \hat{\bl M}_0^\top \hat{\Sigma}_{0\gamma}^{-1}) \bl y_0 \\
	&= \hat{\Sigma}_{0\gamma} \bl U (\bl U^\top \hat{\Sigma}_{0\gamma} \bl U)^{-1} \bl U^\top \bl y_0
\end{aligned}
\end{equation*}
where $\bl U$ is a $p \times (p - k)$ matrix whose columns form an orthonormal basis for the null space of $\hat{\bl M}_0^\top$. The last line is again a result of the \cite{seber_linear_2003} identity. Pre-multiplying by $\hat{\Delta}^\top \hat{\Sigma}_{0\gamma}^{-1}$, we see that the second term is equal to 
\begin{equation*}
\hat{\Delta}^\top \bl U (\bl U^\top \hat{\Sigma}_{0\gamma} \bl U)^{-1} \bl U^\top \bl y_0.
\end{equation*}

Now write $\bl U$ as
\begin{equation*}
\bl U = \left[ \begin{array}{cc}
\bl U_1 & \bl U_2
\end{array} \right]
\end{equation*}
where the $n-k$ columns of $\bl U_1$ are the left singular vectors of $\bl X_0 (\bl X_0^\top \bl X_0)^{-1} \bl N$, and the $p - n$ columns of $\bl U_2$ are the left singular vectors of $(\bl I_p - \bl P_{\bl X_0})$. This is a valid choice for $\bl U$ because
\begin{equation*}
\begin{aligned}
	\hat{\bl M}_0^\top \bl X_0 (\bl X_0^\top \bl X_0)^{-1} \bl N &= \bl 0 \\
	\hat{\bl M}_0^\top (\bl I_p - \bl P_{\bl X_0}) = \bl 0
\end{aligned}
\end{equation*}
implying that the column spaces of $\bl U_1$ and $\bl U_2$ are both in the null space of $\hat{\bl M}_0^\top$, and $(\bl I_p - \bl P_{\bl X_0})\bl X_0 (\bl X_0^\top \bl X_0)^{-1} \bl N = \bl 0$, which implies that $\bl U_2^\top \bl U_1 = \bl 0$. Let the singular value decomposition of $\bl X_0 (\bl X_0^\top \bl X_0)^{-1} \bl N$ be $\bl U_1 \bl D_1 \bl V_1^\top$. A direct calculation shows that
\begin{equation*}
(\bl U^\top \hat{\Sigma}_{0\gamma} \bl U)^{-1} = \left[\begin{array}{cc}
\bl D_1^{-2} + \gamma \bl I_{n-k} & \bl 0 \\
\bl 0 & \gamma \bl I_{p-n}
\end{array} \right]^{-1}.
\end{equation*}
Also, $\hat{\Delta}^\top \bl U = [\bl X' \bl N \bl V_1 \bl D_1^{-1} ~ \bl 0]$, so we have
\begin{equation*}
\begin{aligned}
\hat{\Delta}^\top \bl U (\bl U^\top \hat{\Sigma}_{0\gamma} \bl U)^{-1} \bl U^\top \bl y_0 &= \bl X' \bl N \bl V_1 \bl D_1^{-1} (\bl D_1^{-2} + \gamma \bl I_{n-k})^{-1} \bl D_1^{-1} \bl V_1^\top \bl N^\top (\bl X_0^\top \bl X_0)^{-1} \bl X_0^\top \bl y_0 .
\end{aligned}
\end{equation*}
The limit of the right hand side as $\gamma \rightarrow \infty$ is $\bl 0$. As $\gamma \rightarrow 0$ the limit of the right hand side is
\begin{equation*}
\begin{aligned}
\bl X' \bl N \bl N^\top (\bl X_0^\top \bl X_0)^{-1} \bl X_0^\top \bl y_0 .
\end{aligned}
\end{equation*}
Therefore,
\begin{equation*}
\lim_{\gamma \rightarrow \infty}~ \hat{\bl y}_\gamma' = \hat{\bl M}' (\hat{\bl M}_0^\top \hat{\bl M}_0)^{-1} \hat{\bl M}_0^\top \bl y_0.
\end{equation*}
and
\begin{equation*}
\begin{aligned}
\lim_{\gamma \rightarrow 0}~ \hat{\bl y}_\gamma' &= \hat{\bl M}' \bl A^\top (\bl X_0^\top \bl X_0)^{-1} \bl X_0^\top \bl y_0  + \bl X' \bl N \bl N^\top (\bl X_0^\top \bl X_0)^{-1} \bl X_0^\top \bl y_0 \\
&= \bl X' \bl P_{\bl A} (\bl X_0^\top \bl X_0)^{-1} \bl X_0^\top \bl y_0 + \bl X'(\bl I_n - \bl P_{\bl A})(\bl X_0^\top \bl X_0)^{-1} \bl X_0^\top \bl y_0 \\
&= \bl X' (\bl X_0^\top \bl X_0)^{-1} \bl X_0^\top \bl y_0
\end{aligned}
\end{equation*}
\end{proof}

\begin{proof}[Proof of Proposition \ref{prop:atrvsrts}]
    Based on \eqref{eq:vars}, it is clear that $\Var{\hat{\bs \theta}_{\mathrm{RTS}}} \preceq \Var{\hat{\bs \theta}_{\mathrm{ATR}}}$ if and only if
    \begin{equation*}
    \gamma \bl V_1 - \bl V_2 \preceq \bl 0.
    \end{equation*}
    Because $\bl V_1$ and $\bl V_2$ are both symmetric and positive definite, the simultaneous diagonalization lemma (see \cite{muirhead_aspects_2005} Theorem A9.9) guarantees the existence of a $K \times K$ invertible matrix $\bl F$ such that
\begin{equation*}
\begin{aligned}
\bl V_1 &= \bl F \bl F^\top \\
\bl V_2 &= \bl F \Lambda \bl F^\top,
\end{aligned}
\end{equation*}
where $\Lambda  = \mathrm{diag}(\lambda_1, \dots, \lambda_K)$ are the eigenvalues of $\bl V_1^{-1} \bl V_2$. This implies that $\gamma \bl V_1 - \bl V_2 \preceq \bl 0$ if and only if $\gamma - \lambda_k \leq 0$ for all $k \in \{1, \dots, K\}$. Hence, $\gamma \leq \lambda_{\min}(\bl V_1^{-1} \bl V_2)$ is necessary and sufficient for $\Var{\hat{\bs \theta}_{\mathrm{RTS}}} \preceq \Var{\hat{\bs \theta}_{\mathrm{ATR}}}$.
\end{proof}

\begin{proof}[Proof of Proposition \ref{prop:sebias}]
    Define the orthogonal matrices $\bl U_1, \bl U_2$ and $\bl U_3$ as in the statement of the proposition. The variance of the $k$th entry of the RTS estimates is
    \begin{equation*}
    v_k = \|\bs \theta\|_2^2 \bl a_k^\top (\bl X^\top \bl X)^{-1} \bl X^\top \Sigma \bl X (\bl X^\top \bl X)^{-1} \bl a_k .
    \end{equation*}
    Let $\hat{v}_k$ be the squared standard error of the RTS estimate proposed in \eqref{eq:sse}. Using properties of the trace, we can write
    \begin{equation*}
    \begin{aligned}
    \mathrm{E}[\hat{v}_k] &= \mathrm{E}[\tr((\bl I_p - \bl P_{\bl X}) \bl y \bl y^\top)]/(p - n) \\
    &= \mathrm{E}[\tr(\bl U_3 \bl U_3^\top \bl y \bl y^\top)]/(p - n) \\
    &= [\tr(\bl U_3 \bl U_3^\top (\|\bs \theta\|_2^2 \Sigma + \bl M \bs \theta \bs \theta^\top \bl M^\top))]/(p - n) \\
    &=\bl a_k^\top (\bl X^\top \bl X)^{-1} \bl a_k [\|\bs \theta\|_2^2 \tr(\bl U_3^\top \Sigma \bl U_3) + \tr(\bl U_3^\top \bl M \bs \theta \bs \theta^\top \bl M^\top \bl U_3)]/(p-n).
    \end{aligned}
    \end{equation*}
    To derive the upper bound in the proposition, see that
    \begin{equation*}
    \begin{aligned}
    v_k - \mathrm{E}[\hat{v}_k] &= \|\bs \theta\|_2^2 \bl a_k^\top (\bl X^\top \bl X)^{-1} \bl X^\top \Sigma \bl X (\bl X^\top \bl X)^{-1} \bl a_k - \\
    &~~~~\bl a_k^\top (\bl X^\top \bl X)^{-1} \bl a_k [\|\bs \theta\|_2^2 \tr(\bl U_3^\top \Sigma \bl U_3) + \tr(\bl U_3^\top \bl M \bs \theta \bs \theta^\top \bl M^\top \bl U_3)]/(p-n) \\
    &\leq \|\bs \theta\|_2^2 \bl a_k^\top (\bl X^\top \bl X)^{-1} \bl X^\top \bl U_2 \bl U_2^\top \Sigma \bl U_2 \bl U_2^\top \bl X (\bl X^\top \bl X)^{-1} \bl a_k - \\
    &~~~~\|\bs \theta\|_2^2 \bl a_k^\top (\bl X^\top \bl X)^{-1} \bl a_k \tr(\bl U_3^\top \Sigma \bl U_3)/(p-n) \\
    &\leq \|\bs \theta\|_2^2 \bl a_k^\top (\bl X^\top \bl X)^{-1} \bl a_k \lambda_{\max}(\bl U_2^\top \Sigma \bl U_2 )- \\
    &~~~~\|\bs \theta\|_2^2 \bl a_k^\top (\bl X^\top \bl X)^{-1} \bl a_k \tr(\bl U_3^\top \Sigma \bl U_3)/(p-n) \\
    &= \|\bs \theta\|_2^2 \bl a_k^\top (\bl X^\top \bl X)^{-1} \bl a_k [\lambda_{\max}(\bl U_2^\top \Sigma \bl U_2 )- \bar{\lambda}(\bl U_3^\top \Sigma \bl U_3) ] .\\
    \end{aligned}
    \end{equation*}
    To derive the lower bound,
    \begin{equation*}
    \begin{aligned}
    v_k - \mathrm{E}[\hat{v}_k] &= \|\bs \theta\|_2^2 \bl a_k^\top (\bl X^\top \bl X)^{-1} \bl X^\top \Sigma \bl X (\bl X^\top \bl X)^{-1} \bl a_k - \\
    &~~~~\bl a_k^\top (\bl X^\top \bl X)^{-1} \bl a_k [\|\bs \theta\|_2^2 \tr(\bl U_3^\top \Sigma \bl U_3) + \tr(\bl U_3^\top \bl M \bs \theta \bs \theta^\top \bl M^\top \bl U_3)]/(p-n) \\
    &\geq \|\bs \theta\|_2^2 \bl a_k^\top (\bl X^\top \bl X)^{-1} \bl X^\top \bl U_2 \bl U_2^\top \Sigma \bl U_2 \bl U_2^\top \bl X (\bl X^\top \bl X)^{-1} \bl a_k - \\
    &~~~~\|\bs \theta\|_2^2 \bl a_k^\top (\bl X^\top \bl X)^{-1} \bl a_k [\tr(\bl U_3^\top \Sigma \bl U_3) + \tr(\bl U_3^\top \bl M \bl M^\top \bl U_3)]/(p-n) \\
    &\geq \|\bs \theta\|_2^2 \bl a_k^\top (\bl X^\top \bl X)^{-1} \bl a_k [\lambda_{\min}(\bl U_2^\top \Sigma \bl U_2 )- \bar{\lambda}(\bl U_3^\top \Sigma \bl U_3) - \bar{\lambda}(\bl U_3^\top \bl M \bl M^\top \bl U_3) ] .\\
    \end{aligned}
    \end{equation*}
\end{proof}

\end{document}